\documentclass[aps,pra,letterpaper,twocolumn]{revtex4}

% This is an attempt to force PDFLatex to output letter-sized PDFs.  I'm not sure if it works.
\pdfpagewidth=8.5 true in
\pdfpageheight=11 true in

% I use the AMS packages in various places, and use "graphicx" to choose the width and height of included PDF graphics.
\usepackage{amsmath,amsthm,amsfonts,graphicx,enumerate}

\begin{document}

% Here are some basic quantum mechanics commands.
\newcommand{\ket}[1]{\left| #1 \right\rangle}
\newcommand{\bra}[1]{\left\langle #1 \right|}
\newcommand{\braket}[2]{\left\langle #1 | #2 \right\rangle}
\newcommand{\braopket}[3]{\bra{#1}#2\ket{#3}}
\newcommand{\proj}[1]{| #1\rangle\!\langle #1 |}
\newcommand{\expect}[1]{\left\langle#1\right\rangle}
\newcommand{\Tr}{\mathrm{Tr}}
\def\Id{1\!\mathrm{l}}
% These are useful too.
\newcommand{\pmat}[1]{\begin{pmatrix}#1\end{pmatrix}}
\newcommand{\diff}{\mathrm{d}\!}
\newcommand{\complex}{\mathbb{C}}
\newcommand{\cT}{\mathcal{T}}
\newcommand{\cL}{\mathcal{L}}

% "FPW" stands for "full page width"... it's useful for spacing commands here and there.
\def\FPW{0.9\textwidth}
\def\HPW{0.45\textwidth}

% Assorted theorem-type environments.
\newtheorem{theorem}{Theorem}
\newtheorem{lemma}{Lemma}
\newtheorem{example}{Example}
\newtheorem{note}{Note}
\newtheorem{protocol}{Protocol}

% This tells TeX to look in the Figures/ directory for "includegraphics" arguments.
\graphicspath{{Figures/}}

\bibliographystyle{apsrev}

\title{Streaming universal distortion-free entanglement concentration}

\author{Robin Blume-Kohout}
\email{robin@blumekohout.com}
\author{Sarah Croke}
\email{scroke@perimeterinstitute.ca}
\author{Daniel Gottesman}
\email{dgottesman@perimeterinstitute.ca}
\affiliation{Perimeter Institute}

\begin{abstract}
This paper presents a streaming (sequential) protocol for universal entanglement concentration at the Shannon bound.  Alice and Bob begin with $N$ identical (but unknown) two-qubit pure states, each containing $E$ ebits of entanglement.  They each run a reversible algorithm on their qubits, and end up with $Y$ perfect EPR pairs, where $Y = NE \pm O(\sqrt N)$.  Our protocol is streaming, so the $N$ input systems are fed in one at a time, and perfect EPR pairs start popping out almost immediately.  It matches the optimal block protocol exactly at each stage, so the average yield after $n$ inputs is $\expect{Y} = nE - O(\log n)$.  So, somewhat surprisingly, there is no tradeoff between yield and lag -- our protocol optimizes both.  In contrast, the optimal $N$-qubit block protocol achieves the same yield, but since no EPR pairs are produced until the entire input block is read, its lag is $O(N)$.  Finally, our algorithm runs in $O(\log N)$ space, so a lot of entanglement can be efficiently concentrated using a very small (e.g., current or near-future technology) quantum processor.  Along the way, we find an optimal streaming protocol for extracting randomness from classical i.i.d. sources and a more space-efficient implementation of the Schur transform.
\end{abstract}

\maketitle

%\section{Introduction}\label{SecIntroduction}
Entanglement between two distant parties is an essential ingredient in quantum communication primitives such as teleportation \cite{BennettPRL93} and dense coding \cite{BennettPRL92}.  It is fungible, and can be transformed with negligible loss between different bipartite states, but the standard currency is EPR pairs, two-qubit states of the form
\begin{equation}
\ket{\Phi^+} = \frac{\ket{0_A0_B}+\ket{1_A1_B}}{\sqrt2},
\end{equation}
where the separated parties ``Alice'' and ``Bob'' each possess one qubit.  Most information processing protocols that use entanglement are designed to use perfect EPR pairs, so if the parties have some generic entangled state $\rho_{AB}$, their first order of business is to transform it into EPR pairs.  This is called \emph{entanglement concentration} if the initial state is pure \cite{BennettPRA96}, and \emph{entanglement distillation} if it is mixed \cite{BennettPRL96}.  For pure states, the appropriate measure of entanglement is given by the von Neumann entropy of the reduced density operator of either subsystem \cite{PopescuPRA97}.  A partially entangled pure state
\begin{equation}\label{eq:PartlyEntangledState}
\ket\psi = \sqrt{p}\ket{0_A0_B} + \sqrt{1-p}\ket{1_A1_B}
\end{equation}
has entanglement $H(p) = - (p\log p +(1-p)\log(1-p))$.  This means that if Alice and Bob collect $N$ pairs, and $N$ is large, then they can concentrate their entanglement into approximately $NH(p)$ EPR pairs.  Remarkably, this requires no communication; they can do it by independently performing local reversible computations \cite{BennettPRA96}.  However, existing protocols for entanglement concentration \cite{BennettPRA96,KayeJPhA01,MatsumotoPRA07} are block algorithms, that is, Alice and Bob must process all $N$ qubits together.  This approach has two drawbacks: \emph{lag} and \emph{memory}.  Alice and Bob get no EPR pairs until all $N$ input qubits have arrived, and they need $N$-qubit quantum computers to store and process all the input qubits.  The experimental state of the art -- roughly 10 qubits as of this writing -- cannot achieve the large block sizes required to approach optimality.  So let us explore what can be achieved with a small quantum information processor.

We could solve the lag and memory problems by breaking the input stream into blocks of length $N_0$, processing them one at a time.  But this also introduces error and/or inefficiency.  Not even a single \emph{perfect} EPR pair can be extracted with certainty from a block of finite length $N_0$.  If Alice and Bob are willing to settle for slightly distorted EPR pairs, then they can do much better.  They can extract $N_0H(p) - O(\sqrt{N_0})$ pairs, each of which has fidelity $1-e^{-O(N_0)}$ with a perfect EPR pair.  However, this protocol cannot approach the Shannon bound for fixed $N_0$; the $O(\sqrt{N_0})$ term represents wasted entanglement.  A better approach is to let each block yield a variable number of EPR pairs.  This achieves an average yield of up to $N_0H(p) - O(\log N_0)$ pairs per block, which still falls short of the Shannon bound for finite $N_0$.

In general, there might be a tension between two goals: achieving the Shannon bound for large $N$, and getting out perfect EPR pairs as quickly as possible for small and intermediate $N$.  In fact, these goals can both be achieved at the same time.  In this paper, we present a \emph{sequential} (a.k.a. instantaneous, streaming, or online) protocol that reads in partially entangled pairs one at a time and outputs perfect EPR pairs as they're generated.  
\begin{theorem}\label{ThmMain}
Let Alice and Bob share many copies of a bipartite pure state $\ket{\psi}$ with entanglement $E$.  There exists an entanglement concentration protocol that Alice and Bob run independently, in parallel, and sequentially on their sequences, which has the following properties.
\begin{enumerate}
\item After both parties have processed $N$ qubits, the expected yield is $NE - O(\log N)$ perfect EPR pairs -- i.e., the optimal rate is achieved.
\item This holds for every $N$, so the lag time is $O(\log N)$.
\item The algorithm works for all input states.
\item It uses only $O(\log N)$ qubits of memory.
\end{enumerate}
\end{theorem}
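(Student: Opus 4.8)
The plan is to reduce the problem to a classical streaming one, solve that, and then lift the solution to a reversible circuit that Alice and Bob run independently. First, by applying local single-qubit unitaries copy-by-copy we may assume $\ket\psi=\sqrt p\,\ket{00}+\sqrt{1-p}\,\ket{11}$ with $p\ge\tfrac12$; the rest of the protocol, including its $O(\log n)$ space bound, has the same form for every $p$, which is the sense of universality in property~3. After $n$ inputs the shared state is then $\sum_{k=0}^n c_k(n)\,\ket{D^n_k}_A\ket{D^n_k}_B$ with $c_k(n)=\sqrt{\binom nk p^{n-k}(1-p)^k}$ and $\ket{D^n_k}$ the Dicke state, so conditioned on the Hamming weight $k$ the parties hold a maximally entangled state of dimension $\binom nk$. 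The optimal block protocol just resolves $k$ coherently and peels EPR pairs out of each $\binom nk$-dimensional sector via the dyadic expansion of $\binom nk$; its expected yield is $\expect{\lfloor\log_2\binom nk\rfloor}=nE-O(\log n)$ by the second-order Stirling estimate together with a concentration bound on $k$. The goal is to do this incrementally and in $O(\log n)$ space.

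Next I would isolate the classical core. Pascal's identity $\binom{n+1}{k}=\binom nk+\binom n{k-1}$ — equivalently the Dicke recursion — furnishes a local unitary that ``absorbs'' one fresh input qubit, mapping the step-$n$ state together with a new copy to the step-$(n{+}1)$ state; iterating it is already streaming, but with $2^n$-dimensional memory, so one must interleave ``emit'' moves. The key classical object is a reversible, streaming, $O(\log n)$-space encoder that reads the input bit-stream and realizes the combinatorial number system guided by Pascal's rule: it keeps only a bounded ``residual'' address together with the current weight and an output counter, and whenever the residual would overflow the memory budget it splits a dyadic block $\Phi_{2^i}$ off the residual's dimension — emitting a small flag and, when run coherently, shipping out $i$ perfect EPR pairs. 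Because the underlying classical map is an exact bijection of bit-strings and all strings of a given weight have exactly equal probability, the emitted qubit-pairs are genuine, undistorted EPR pairs — there is none of the finite-precision trouble that afflicts a real-valued arithmetic coder. This encoder is exactly a space-efficient, streaming realization of the qubit Schur transform on the permutation-symmetric sector that carries the entanglement, the by-product mentioned in the abstract.

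With the encoder in hand the quantum protocol is: both parties run it as a reversible circuit on one input qubit, a pushed-out stream of output qubits, and $O(\log n)$ work qubits. Properties~3 and~4 are then built in. For property~1 I would argue by induction that the absorb-and-maybe-emit step is the optimal incremental move, so that the reduced state on the (emitted $\cup$ work) registers after $n$ steps coincides with the output of the optimal $n$-qubit block protocol; expected yield $nE-O(\log n)$ then follows from the block estimate above. For property~2, emission begins within $O(1)$ steps (as soon as the residual first exceeds the size of one leftover qubit) and the shortfall $\lfloor\log_2\binom nk\rfloor-Y_n$ is never more than the $O(\log n)$-qubit residual, so the lag is $O(\log n)$.

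\emph{The hard part} is forcing the work register to stay $O(\log n)$ qubits in the \emph{worst case} — on every branch of the superposition — rather than merely on average. A naive enumerative/arithmetic coder suffers carry propagation, and on atypical inputs the residual can balloon; worse, when Pascal's rule combines two weight-sectors that have emitted different numbers of EPR pairs, aligning them seems to demand ``un-emitting'' pairs already committed to the output. The resolution is to choose the emit rule, and a fall-back for the rare misaligned or overflowing branches, so that (i) the residual is deterministically $O(\log n)$ qubits at all times, (ii) every branch is processed by the same fixed-size reversible circuit, and (iii) the suboptimality incurred on the exponentially-unlikely bad branches contributes only $o(1)$ to the expected yield. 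Carrying this out carefully — so that the streaming protocol still matches the block optimum at every stage — is the technical heart, and is precisely what lets it optimize yield and lag at once instead of trading one against the other.
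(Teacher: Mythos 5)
Your classical core --- streaming Elias extraction by walking Pascal's triangle, merging equiprobable weight-sectors via the dyadic expansion of $\binom{n}{k}$, and emitting perfect EPR pairs when bins of size $2^i$ split off --- is exactly the paper's mechanism, and your lag argument (reversibility plus bounded memory forces $nE-O(\log n)$ emission) matches the paper's. But you flag the genuinely hard step and then do not resolve it, and the resolution you sketch is the wrong one. When Pascal's rule combines the sectors $(n-1,T)$ and $(n-1,T-1)$, the paper's protocol \emph{only ever fuses two bins of equal size $2^L$ coming from branches that have already emitted the same number $L$ of output bits}; the fused bin has size $2^{L+1}$ and the just-read input bit $b$ is emitted (uniform, because the two fused paths are equiprobable and lead to the identical memory state $(n,T,L+1)$). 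The resulting ``carry bin'' may cascade, but each cascaded fusion emits a bit whose value is \emph{determined} by which side contributed the non-carry path, so it costs no randomness and no extra memory. Consequently the work register is exactly the triple $(N,T,L)$, deterministically $O(\log N)$ on every branch, there is never any ``un-emitting,'' and the streaming protocol matches Elias's block protocol \emph{exactly} --- not merely up to an $o(1)$ loss on exponentially rare branches, which is what your proposed fall-back would concede. Your proposal also omits the point that makes the outputs undistorted: it is not enough that the map is a bijection on equiprobable strings (Huffman coding is also a bijection and fails); one needs that every path emitting a $0$ at position $m$ is paired with an equiprobable path emitting a $1$ there while producing the \emph{identical} memory state and identical other outputs, so that the emitted qubit is uncorrelated with the memory and with all later outputs.

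The second, larger gap is universality. You reduce to $\ket\psi=\sqrt p\ket{00}+\sqrt{1-p}\ket{11}$ by local rotations, which presupposes that Alice and Bob know the Schmidt basis; the theorem's claim 3 (and the word ``unknown'' in the abstract) means the protocol must work with \emph{no} knowledge of the state, basis included. Your assertion that the combinatorial encoder ``is exactly a streaming realization of the qubit Schur transform'' is not correct: that encoder acts on computational-basis type classes and is manifestly basis-dependent. The paper achieves true universality by interleaving a streaming Clebsch--Gordan/Schur transform with the Elias encoder: each input qubit is first Clebsch--Gordan transformed into an updated irrep label $T$ (now a Young diagram, traversing Young's lattice rather than Pascal's triangle), an $SU(2)$ register $U$ carrying the unknown-basis information, and a permutation-register qubit $p_n$; the Elias machinery is then applied to $p_n$ conditioned on $T$, with every binomial coefficient replaced by the $S_N$ irrep dimension $\binom{N}{T}\frac{N-2T+1}{N-T+1}$ from the hook-length formula. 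Without this component (or an equivalent basis-independent decomposition), and without an argument that the required coefficient bits are computable in $O(\log N)$ space, your proposal establishes claims 1, 2, and 4 only for a known Schmidt basis and does not establish claim 3.
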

This protocol is fully reversible and coherent, involving no measurements.  As a result, the points in Theorem \ref{ThmMain} are not independent.  E.g., since the algorithm is reversible, it does not destroy any entanglement -- and since it runs in $O(\log N)$ memory, at least $NE - O(\log N)$ bits of entanglement must have been emitted at any time $N$.

The rest of this paper constitutes the proof of Theorem \ref{ThmMain}.  It is organized as follows.  In Section \ref{SecCompression} we discuss data compression and show that quantum variable length compression codes are not suitable for entanglement concentration.  In section \ref{SecRandomness} we turn to classical randomness extraction, and discuss Elias's optimal block extractor.  In section \ref{SecStreaming} we construct a streaming version of Elias's randomness extraction protocol, and show that it can be used for entanglement concentration when the Schmidt basis is known.  In Section \ref{SecSchur} we build a fully universal protocol by combining our extraction protocol with the quantum Schur transform.

%As a side benefit, this protocol also performs quantum data compression.  The output stream of EPR pairs, concatenated with the $O(\log N)$ qubits left in memory after $N$ input states, forms a compressed representation of the first $N$ input states.

\section{Data compression:  why it doesn't work}\label{SecCompression}

There is a deep link between entanglement concentration and quantum data compression.  Given the state in Eq. \ref{eq:PartlyEntangledState}, Alice and Bob each describe their $n$th input qubit by a density matrix
\begin{equation}
\rho_A = \rho_B = p\proj{0}+(1-p)\proj{1},
\end{equation}
with entropy $H(p)\leq1$, all of which is due to entanglement with the other party.  Concentrating the entanglement contained in $N$ input qubits into $M$ EPR pairs, for which the parties' reduced states are
\begin{equation}
\rho'_A = \rho'_B = \frac12\left(\proj{0}+\proj{1}\right),
\end{equation}
means compressing the entropy of $N$ copies of $\rho$ into $M$ maximally mixed states.  Done reversibly, this is data compression.  Indeed, the original entanglement concentration protocol of Bennett et. al. \cite{BennettPRA96} is essentially a block compression algorithm, followed by a measurement of Hamming weight.

Seeking a sequential protocol for entanglement concentration, we might therefore turn to sequential data compression protocols.  Some of the oldest and best-known methods of classical data compression are of this type.  Variable-length protocols such as Huffman coding and arithmetic coding replace each input symbol with a codeword whose length depends on the symbol's probability.  Quantum algorithms for Huffman coding and arithmetic coding exist \cite{BraunsteinIEEE00,ChuangIEEE00} (the Chuang-Modha algorithm for arithmetic coding is actually a block protocol, but there's no fundamental obstacle to sequential quantum arithmetic coding).  However, the total length of the transmission is entangled with the messages being sent.  So, although the encoder can compress sequentially, the decoder must wait until the end of the transmission to start decoding.

For this reason, variable-length compression does not accomplish entanglement concentration.  Even under optimal circumstances (i.e., where a Huffman code with block-length 1 achieves the Shannon bound for compression), Alice and Bob's output qubits are not perfect EPR pairs.  Although each party's $n$th output qubit is indeed maximally mixed (as it should be, if it's to be half of an EPR pair), it is correlated with subsequent output qubits, e.g. the $(n+1)$th qubit.  This correlation decoheres the EPR pairs.

To see a simple example of this, consider a 4-dimensional input Hilbert space spanned by $\{\ket{a},\ket{b},\ket{c},\ket{d}\}$, and a source that emits
\begin{eqnarray}
\ket{\psi}_{\mathrm{in}} = \frac{1}{\sqrt2}\ket{aa} + \frac{1}{\sqrt4}\ket{bb} + \frac{1}{\sqrt8}\ket{cc} + \frac{1}{\sqrt8}\ket{dd} &&\\
\Longrightarrow \rho_A = \rho_B = \frac12\proj{a} + \frac14\proj{b} + \frac18\proj{c} + \frac18\proj{d}&&
\end{eqnarray}
This distribution, with entropy $H=1.75$ bits, can be compressed perfectly into qubits by the following Huffman code:
\begin{eqnarray}
\ket{a} &\to& \ket{0} \nonumber \\
\ket{b} &\to& \ket{10} \nonumber \\
\ket{c} &\to& \ket{110} \nonumber \\
\ket{d} &\to& \ket{111}. 
\end{eqnarray}
If Alice and Bob each apply this protocol to their input streams, the \emph{first} partially-entangled pair is transformed to
\begin{eqnarray}
\ket{\psi}_{\mathrm{out}} &= &\ \ \frac{1}{\sqrt2}\ket{0}_A\ket{0}_B + \frac{1}{2}\ket{10}_A\ket{10}_B  \vspace{0.1in} \\ &&+\frac{1}{2\sqrt2}\ket{110}_A\ket{110}_B + \frac{1}{2\sqrt2}\ket{111}_A\ket{111}_B.\nonumber
\end{eqnarray}
Consider the reduced state of Alice's and Bob's first output bits, obtained by tracing out the 2nd and 3rd bits (the string is implicitly zero-padded, so unspecified bits are in $\ket{0}$).  In the basis $\{\ket{00},\ket{01},\ket{10},\ket{11}\}$, it is
\begin{equation}
\rho_{\mathrm{out}} = \pmat{\frac12 & 0 & 0 & \frac{1}{2\sqrt2} \\
0 & 0 & 0 & 0 \\
0 & 0 & 0 & 0 \\
\frac{1}{2\sqrt2} & 0 & 0 & \frac12},
\end{equation}
This state's fidelity with an EPR state is only $\frac{1+\sqrt2}{2\sqrt2} \approx 0.85$.  Furthermore (and this is important!), since this protocol is sequential, it will never go back and change the first bit.  Nothing that Alice and Bob do to subsequent output bits can enhance the entanglement of their first pair; it will always be defective.

This failure reflects an inherent property of variable-length codes:  each output symbol is correlated with the length of the entire output (see \cite{BraunsteinIEEE00} for the first mention of this issue, but in a different context).  The correlation is indirect, for both the individual output symbols and their overall length are determined by the input symbols.  For a rather extreme example, recall that the Huffman code given above maps $\ket{a}\to\ket{0}$ and $\ket{d}\to\ket{111}$.  If the output string contains high proportion of $\ket{0}$ qubits, then the input string must have contained a lot of $\ket{a}$ symbols, and therefore the output string will be relatively short.  A high proportion of $\ket{1}$ qubits, on the other hand, means that the input contained a lot of $\ket{d}$ symbols, and so the output is relatively long.  This correlation is enough to decohere each individual output EPR pair.

\section{Extracting randomness}\label{SecRandomness}

This failed experiment in using standard data compression demonstrates a key point:  in sequential concentration, Alice's $n$th output qubit must not be correlated with \emph{anything} except Bob's $n$th output qubit (and vice-versa), from the moment it is written down.  No subsequent actions by the concentrator can fix a defective output.  The first step in a protocol that emits a stream of perfect EPR pairs is to generate just \emph{one} perfect EPR pair.  This cannot be done deterministically with a finite number of input qubits, but it can be done conditionally -- i.e., \emph{if} a pair is generated, then it is perfect.

If Alice and Bob know their shared state, then extracting a perfect EPR pair is closely related to a classical problem:  ``How do we extract a perfect \emph{independent} random bit from a stream of biased, i.i.d., random bits?''  It is critical that each extracted bit be independent of \emph{everything}, including other random bits and the processor's memory.

\subsection{Von Neumann's protocol}

Von Neumann addressed this problem in 1951 \cite{VonNeumann51}.  He proposed sampling the biased bits two at a time.  The odd-parity sequences ``01'' and ``10'' have equal probability, so if the first two bits have odd parity, Von Neumann reports the first bit.  If we draw two bits with even parity (``00'' or ``11''), we discard them and draw another pair.

Each time a pair is drawn, the Von Neumann scheme emits a random bit with probability $2p_0p_1$, and fails with probability $1-2p_0p_1$.  The number of input bit pairs required to get a single random bit is exponentially distributed,
\begin{equation}
\mathrm{Pr}(n) = 2p_0p_1(1-2p_0p_1)^{n-1},
\end{equation}
and the expected waiting time for the first random bit is
\begin{equation}
\expect{n} = \frac{1}{p_0p_1}.
\end{equation}
Since the protocol is completely Markovian, the rate at which randomness is extracted is
\begin{equation}
R \equiv \frac{\diff N_{\mathrm{rbits}}}{\diff n} = p_0p_1.
\end{equation}
This is quite a bit less than the theoretical upper bound, $R_{\mathrm{max}} = H(p_0)$, because Von Neumann's protocol wastes a lot of entropy. However, it is a sequential protocol, and it can be used for entanglement concentration.  Alice and Bob each run the following algorithm:
\begin{enumerate}
\item Draw two qubits $q_1,q_2$ from the input.
\item Perform a CNOT (in the Schmidt basis) from $q_1\to q_2$.  This stores the parity, $q_1\oplus q_2$, in $q_2$.
\item Conditional on $q_2=\proj{1}$, swap $q_1$ with the ``output register", and halt; otherwise draw two new qubits and repeat.
\end{enumerate}
Thus, if Alice and Bob share multiple copies of the entangled state
\begin{equation}
\ket{\psi} = \sqrt{p_0} \ket{0_A 0_B} + \sqrt{p_1} \ket{1_A 1_B},
\end{equation}
the first two copies are tranformed as follows:
\begin{eqnarray}
\ket{\psi \psi} &=& p_0 \ket{00_A} \ket{00_B} + p_1 \ket{11_A} \ket{11_B} \nonumber \\
&& + \sqrt{p_0 p_1} \left( \ket{01_A} \ket{01_B} + \ket{10_A} \ket{10_B} \right) \nonumber \\
&\Rightarrow& \left( p_0 \ket{0_A 0_B} + p_1 \ket{1_A 1_B} \right) \ket{0_A 0_B} \nonumber \\
&& + \sqrt{p_0 p_1} \left(\ket{0_A 0_B} + \ket{1_A 1_B} \right) \ket{1_A 1_B}.
\end{eqnarray}
Conditional on Alice and Bob's second qubits each being in state $\ket{1}$, their first qubits now form a perfect EPR pair, $\ket{\Phi^+}$.  Otherwise, their joint state is given by 
\begin{equation}
\ket{\psi}_{\mathrm{fail}} = \frac{p_0}{\sqrt{1-2 p_0 p_1}} \ket{0_A 0_B} + \frac{p_1}{\sqrt{1-2 p_0 p_1}} \ket{1_A 1_B},
\end{equation}
and they each read another two qubits and repeat.  Note that there is substantial entanglement left in $\ket{\psi}_{\mathrm{fail}}$.  In Von Neumann's protocol, this entanglement is wasted, and we will get a better protocol by recycling it.

The protocol given above continues to draw pairs until it succeeds, at which point it deposits an EPR pair into Alice and Bob's first qubits and halts.  Running this coherently and in parallel on $2N$ copies of $\ket{\psi}$ gives
\begin{eqnarray}
\ket{\psi \psi}^{\otimes N} &\to& \sqrt{2 p_0 p_1} \ket{\Phi^+} \left( \sum_{k=0}^{N-1} (1 - 2 p_0 p_1)^{k/2} \right. \nonumber \\
&& \left. ( \ket{0_A 0_B} \ket{\psi}_{\mathrm{fail}} )^{\otimes k} \ket{1_A 1_B} \ket{\psi \psi}^{\otimes(N-k-1)}  \right)\nonumber \\
&& + (1-2p_0 p_1)^{N/2} (\ket{\psi}_{\mathrm{fail}} \ket{0_A 0_B})^{\otimes N}.
\end{eqnarray}
The amplitude for not halting decreases exponentially with $N$, so for moderately large $N$ we can be nearly certain that an EPR pair has been deposited.

This quantum Von Neumann protocol uses an indeterminate and unbounded number ($2k+2$) of input pairs to produce a single perfect EPR pair.  If we want a perfect EPR pair with certainty, then $N_{\mathrm{input}}$ \emph{must} be unbounded.  A finite-sized block of partially entangled states does not generally contain even a single perfect EPR pair.  Fortunately, $k$ is exponentially distributed, so we can get an extraordinarily good EPR pair by terminating the algorithm at relatively small $k$.

The algorithm can be iterated, without any modification, to extract a stream of EPR pairs.  This is ``on-demand'' mode:  the user requests exactly 1 (or $n$) EPR pairs, and the protocol reads as many input pairs as are needed.  If we wait for a near-perfect EPR pair, then a lot of time is wasted.  The algorithm probably (i.e., with large amplitude) halts at small $k$, and yet achieving near certainty mandates waiting for longer (but low-amplitude) computational paths to terminate.

Alternatively, we could replace the output register with an output tape, and replace the ``halt'' instruction with ``push $q_1$ onto the tape and shift it by one qubit.''  Now, the algorithm never terminates unless it runs out of inputs.  As soon as it produces one EPR pair, it starts working on the next.  In this fully streaming mode, the length of the output tape is always indeterminate, but $k$ (the number of bits read so far) can be well-defined (e.g., if the algorithm terminates).

The protocols we will design in subsequent sections can be run in either mode.  We will typically focus on the fully streaming mode, where the output tape's length is indeterminate, because it is compatible with a bounded input tape.  In the quantum Von Neumann protocol, this mode is relatively unproblematic.  To get an EPR pair, the user pops one off the end of the tape (without learning how long the output tape is).  A problem occurs only if the user finds \emph{no} available pairs, which implies that the output tape is empty.  %This disrupts the in-progress concentration and forces the computation to start from scratch, but there are no severe consequences.

This is \emph{not} true for other protocols, which recycle entanglement in order to achieve much higher efficiency.  This recycling requires a coherent superposition of many output tape lengths. Disrupting this superposition (by issuing a failed request for an EPR pair that is, with some amplitude, not available) will reduce efficiency.  So in these protocols, the first few squares of the output tape must be regarded as a sort of incubator -- a region where EPR pairs are almost certainly available, but should nonetheless not be used.  Running the protocol in on-demand mode avoids this problem entirely (but requires an unbounded stream of inputs).

\subsection{Achieving the Shannon bound:  Elias's protocol}

Von Neumann's protocol wastes at least 75\% of the entropy in the input bits, and the corresponding entanglement concentration protocol wastes an equal amount of entanglement.  Block protocols, in contrast, can extract randomness or entanglement with asymptotically perfect efficiency -- i.e., at a rate given by the entropy of the source, as $N\to\infty$.  We will now develop a sequential protocol that achieves the entropic bound as $N\to\infty$.  In fact, our protocol is a sequential implementation of the optimal block protocol, and extracts at most 2 ebits less than it.

Quite a few papers have followed up on Von Neumann's work, generalizing and improving it.  Early work focused on the extraction of a single random bit, and sought to minimize the expected number of input bits.  Hoeffding and Simons \cite{Hoeffding70} represented algorithms as random walks on the lattice of non-negative integer points in the plane, $\{n_0,n_1\}$.  Stout and Warren \cite{Stout84} represented algorithms more generally as walks on binary trees.  Other authors (notably Samuelson \cite{Samuelson68} and Elias \cite{Elias72}) showed how to extract random bits from $k$th-order Markov processes, a particular kind of non-i.i.d.~source.  A flood of more recent work (beginning with Trevisan's seminal paper in 1998) has generalized the notion of \emph{extractors} to extremely general non-i.i.d.~sources; this level of generalization, however, is not relevant to our task.

Each of these single-bit extraction protocols can be repeated (like Von Neumann's) to yield a stream of random bits (or EPR pairs, in the context of entanglement concentration).  Such protocols never approach the Shannon bound, since any residual entropy/entanglement in the used input bits is wasted (Hoeffding and Simons \cite{Hoeffding70} proved an upper bound of $R=1/3$ on the rate, and demonstrated an algorithm that achieves $R\approx0.323$ as $p\to\frac12$).  An efficient protocol has to somehow recycle this entropy.

Elias seems to have been both the first and the last to suggest an asymptotically efficient block protocol \cite{Elias72}.  Elias's protocol, which is essentially unimprovable, uses the fact that every $N$-bit string containing $T$ ``1'' bits has probability
\begin{equation}
\mathrm{Pr}(N,T) = p_0^{N-T}p_1^T
\end{equation}
The set of all such strings is a \emph{type class}, containing exactly $\binom{N}{T}$ strings with the same probability.  If we draw an $N$-bit string, then conditional on the type being $T$, the index $\alpha \in\left[1\ldots\binom{N}{T}\right]$ of the \emph{particular} string drawn is a uniformly random variable with $\binom{N}{T}$ possible values.  If $\binom{N}{T}$ happens to equal $2^L$, then by writing this index down in binary, we immediately get $L$ perfectly random bits.  Otherwise, we use the binary representation of $\binom{N}{T}$ to expand it as a sum of powers of 2,
\begin{equation}
\binom{N}{T} = 2^{L_1} + 2^{L_2} + \ldots + 2^{L_n},
\end{equation}
and divide the interval $\mathcal{I} = \left[1\ldots\binom{N}{T}\right]$ into bins
\begin{eqnarray*}
\mathcal{I}_{L_1} &=& [1\ldots 2^{L_1}]\\
\mathcal{I}_{L_2} &=& [2^{L_1}+1\ldots 2^{L_1}+2^{L_2}]\\
&\hdots&\\
\mathcal{I}_{L_k} &=& \left[\left(\sum_{j=1}^{k-1}{2^{L_k}}\right)+1\ldots \left(\sum_{j=1}^{k-1}{2^{L_k}}\right)+2^{L_k}\right]\\
&\hdots&
\end{eqnarray*}
If the index $\alpha$ lies in the interval $\mathcal{I}_{L_k}$, we output $\alpha_{L_k} = \alpha - \left(\sum_{j=1}^{k-1}{2^{L_k}}\right)$ as a $L_k$-bit string.

\begin{theorem}\label{thm:EliasExtraction}
On average, Elias's protocol extracts at least $NH(p) - \log_2(N+1)-2$ bits of entropy from $N$ input bits, so as $N\to\infty$, it achieves the Shannon bound.
\end{theorem}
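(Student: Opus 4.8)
The plan is to split the argument into two pieces --- a bound \emph{within} a single type class, and an average \emph{over} type classes --- and glue them with the chain rule for Shannon entropy. Start by conditioning on the type $T$. Then the input is uniform over the $M\equiv\binom{N}{T}$ strings of that class, and Elias's map emits a binary string of length $L_k$ with probability $2^{L_k}/M$, the $2^{L_k}$ being precisely the powers of two appearing in the binary expansion of $M$. So the conditional expected yield is $Q(M)=\tfrac1M\sum_k 2^{L_k}L_k$, and what I want is $Q(M)\ge\log_2 M-2$.

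The key is a recursion. Writing $2^{L_1}$ for the largest power of two not exceeding $M$ and $M'=M-2^{L_1}$, with probability $2^{L_1}/M$ the map emits $L_1$ bits, and otherwise it runs the length-$M'$ Elias map on a variable uniform over $M'$ values, so
\begin{equation}
Q(M)=\frac{2^{L_1}}{M}\,L_1+\frac{M'}{M}\,Q(M').
\end{equation}
Setting $\beta=M'/M\in[0,\tfrac12)$ and $g(M)=\log_2 M-Q(M)$, and substituting $L_1=\log_2 M+\log_2(1-\beta)$ and $\log_2 M'=\log_2 M+\log_2\beta$, the $\log_2 M$ terms cancel and the recursion collapses to the self-similar form
\begin{equation}
g(M)=H(\beta)+\beta\,g(M').
\end{equation}
Since $g$ vanishes when $M$ is a power of two (the base case), induction gives $g(M)\le\max_{\beta\in[0,1/2]}\!\bigl(H(\beta)+2\beta\bigr)$, and because $H'(\beta)+2>0$ on $[0,\tfrac12)$ this maximum sits at $\beta=\tfrac12$, where it equals $H(\tfrac12)+1=2$. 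Hence $Q\!\bigl(\binom{N}{T}\bigr)\ge\log_2\binom{N}{T}-2$. I expect this to be the main obstacle: not because it is hard once set up, but because everything hinges on noticing that the defect $g(M)=\log_2 M-Q(M)$ obeys a clean recursion, which is exactly what makes the constant ``$2$'' drop out (and incidentally shows it is essentially tight, since $g(2^k-1)\to2$).

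Finally I would average over the type. Taking the expectation over $T$, the expected number of output bits is $\sum_T\mathrm{Pr}(T)\,Q\!\bigl(\binom{N}{T}\bigr)\ge\expect{\log_2\binom{N}{T}}-2$. The chain rule applied to the i.i.d.\ source $X_1\cdots X_N$ --- whose entropy is $NH(p)$, and which conditioned on $T$ is uniform on a set of size $\binom{N}{T}$, so $H(X_1\cdots X_N\mid T)=\expect{\log_2\binom{N}{T}}$ --- gives $\expect{\log_2\binom{N}{T}}=NH(p)-H(T)$, and $H(T)\le\log_2(N+1)$ since $T$ ranges over only the $N+1$ values $0,\dots,N$. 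Combining,
\begin{equation}
\textstyle\sum_T\mathrm{Pr}(T)\,Q\!\bigl(\binom{N}{T}\bigr)\ \ge\ NH(p)-\log_2(N+1)-2,
\end{equation}
which is the claim; dividing by $N$ and letting $N\to\infty$ recovers the Shannon rate $H(p)$. (One checks separately that the emitted bits are genuinely unbiased --- conditioned on the output length they are uniform on bit strings of that length --- so that ``bits extracted'' really is ``entropy extracted''.) Steps two and three are routine entropy bookkeeping once the per-type bound is in hand.
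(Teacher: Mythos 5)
Your proof is correct, and its top-level skeleton matches the paper's: condition on the type $T$, use the chain rule to peel off $H(T)\le\log_2(N+1)$, and then show that the ``binning defect'' within a type class costs at most $2$ bits. Where you genuinely diverge is in how that last bound is established. The paper identifies the defect as the conditional entropy $H(L|T)$ of the bin label (note that your $g(M)=\log_2 M-Q(M)$ is literally $H(L|T=t)$ for $M=\binom{N}{t}$, since $-\sum_k\frac{2^{L_k}}{M}\log_2\frac{2^{L_k}}{M}=\log_2 M-\frac1M\sum_k 2^{L_k}L_k$) and bounds it by a majorization argument: the distribution $\Pr(L_k)=2^{L_k}/M$ majorizes the geometric distribution $2^{-l}$, $l=1,2,\ldots$, whose entropy is exactly $2$, so by Schur-concavity $H(L|T)\le 2$. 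You instead exploit the self-similar structure of the binary expansion, deriving the recursion $g(M)=H(\beta)+\beta\,g(M')$ with $\beta=M'/M<\tfrac12$ and closing it by induction on the number of set bits. Both are sound; the majorization route is a one-line appeal to a standard fact once the comparison distribution is spotted, while your recursion is more self-contained, makes the extremal role of $M=2^k-1$ (hence the tightness of the constant $2$) transparent, and would generalize directly to binnings other than binary. Your parenthetical check that conditional on $(T,L)$ the output is uniform --- so that expected output length equals extracted entropy --- is exactly the point that reconciles your $\expect{L}$ accounting with the paper's $H(\alpha_L|T,L)$ accounting, and is worth keeping explicit.
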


\begin{proof}
To prove this, we let $s$ be an $N$-bit string, and observe that $s$ is equivalent to a pair of indices $(T,\alpha)$, where $T$ is its type and $\alpha$ the index of $s$ within type T.  Furthermore, $\alpha$ is equivalent to a pair $(L,\alpha_L)$, where $\mathcal{I}_L$ is the bin of $\binom{N}{T}$ containing $\alpha$, and $\alpha_L$ is its $L$-bit index within $\mathcal{I}_L$.  Since we output the entirety of $\alpha_L$, the extracted entropy is $H(\alpha_L|T,L)$.  Now, since $s \sim (T,L,\alpha_L)$, we apply the chain rule for conditional entropy,
\begin{equation}
H(Y|X) = H(Y,X) - H(X),
\end{equation}
to obtain
\begin{eqnarray*}
H(\alpha_L|T,L) &=& H(T,L,\alpha_L) - H(T,L) \\
&=& H(s) - \left[H(T)+H(L|T)\right].
\end{eqnarray*}
The input distribution has exactly $NH(p)$ bits of entropy, so $H(s) = NH(p)$.  Since there are only $N+1$ types, $H(T)\leq\log_2(N+1)$ (actually, $T$ is binomially distributed, so $H(T) = \frac12\log_2(N)+O(1)$).  To calculate $H(L|T)$, recall that
\begin{equation}
\binom{N}{T} = \sum_k{2^{L_k}},
\end{equation}
so $L$ takes values $\{L_k\}$ with probability
\begin{equation}
P(L_k) = \binom{N}{T}^{-1}2^{L_k},
\end{equation}
and $H(L|T)$ is just the entropy of this distribution.  Now, we can place an upper bound of $2$ bits on $H(L|T)$ by the following argument:

Let $n$ be an integer, with a binary expansion
\begin{equation}
n = \sum_{k}{2^{L_k}}
\end{equation}
where $L_0>L_1>\ldots L_K$.  This defines a probability distribution over $L$,
\begin{equation}
\mathrm{Pr}(L_k)\equiv \mathrm{Pr}(L=L_k) = \frac{2^{L_k}}{n}.
\end{equation}
Now, since $L_0>L_1>\ldots L_K$, then it's easy to see that $\mathrm{Pr}(L_0)>\frac12$, and that $\mathrm{Pr}(L_1)>\frac12(1-\mathrm{Pr}(L_0))$, and in general that $\mathrm{Pr}(L_k)>\frac12\mathrm{Pr}(L \leq L_k)$.  Thus $\mathrm{Pr}(L)$ majorizes the infinite exponential distribution given by
\begin{equation}
\mathrm{Pr}(l) = 2^{-l}\mathrm{\ :\ }l=1\ldots\infty
\end{equation}
whose entropy is exactly 2 bits.  Since entropy is convex, $H(L|T)\leq H(l)=2$.
\end{proof}

Like Von Neumann's protocol, Elias's protocol, if performed coherently, gives an entanglement concentration protocol.  The original block concentration protocol~\cite{BennettPRA96} uses a decomposition very similar to Elias's, while subsequent work by Kaye and Mosca uses exactly this decomposition~\cite{KayeJPhA01}.  Whereas Von Neumann's protocol yields either 0 or 1 EPR pairs, and can be repeated conditional on failure to yield exactly 1 pair, Elias's protocol yields a variable, binomially distributed number of EPR pairs.

\begin{theorem}\label{thm:EliasConcentration}
Elias's protocol, performed coherently on $N$ copies of the bipartite state $\ket\psi$, has an average yield of at least $NH(\rho) - \log_2(N+1)-2$ EPR pairs, where $\rho = \Tr_B\proj{\psi}$.
\end{theorem}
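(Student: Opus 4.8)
The plan is to run Elias's classical extractor coherently in the Schmidt basis, and to notice that the property making it work classically --- every string of a given type has the same probability --- becomes quantumly the statement that every branch of a given type has the same \emph{amplitude}, so that each bin of size $2^{L_k}$ in Elias's partition carries an honest maximally entangled state of Schmidt rank $2^{L_k}$, i.e.\ exactly $L_k$ perfect EPR pairs. First I would fix the Schmidt decomposition $\ket{\psi} = \sqrt{p_0}\ket{0_A0_B}+\sqrt{p_1}\ket{1_A1_B}$, so that, with $\Pr(s)=p_0^{N-T(s)}p_1^{T(s)}$ depending on $s$ only through its Hamming weight $T(s)$,
\[
\ket{\psi}^{\otimes N} = \sum_{s\in\{0,1\}^N}\sqrt{\Pr(s)}\;\ket{s}_A\ket{s}_B .
\]
Elias's protocol is the reversible classical map $U:s\mapsto(T,L,\alpha_L)$, with $T=T(s)$ the type, $L=L_k$ the bin of $\binom{N}{T}$ containing the within-type index $\alpha$, and $\alpha_L$ the $L$-bit offset of $\alpha$ inside that bin. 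Being injective it extends to a unitary, and Alice and Bob each apply this same $U$ to their $N$ qubits, keeping $(T,L)$ in retained registers and treating $\alpha_L$ as output.

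Next I would transform the state. Applying $U_A\otimes U_B$ and regrouping the sum over $s$ by $(t,l)=(T(s),L(s))$ --- using that $\Pr(s)$ is constant on each type class and that $\alpha_L$ runs over all of $\{0,1\}^l$ once $(T,L)$ is fixed --- yields
\begin{align*}
(U_A\otimes U_B)\,\ket{\psi}^{\otimes N} &= \sum_{s}\sqrt{\Pr(s)}\;\ket{U(s)}_A\ket{U(s)}_B\\
&= \sum_{t,l}\sqrt{2^{l}\,\Pr(N,t)}\;\ket{t,l}_A\ket{t,l}_B\otimes\ket{\Phi^{(l)}},
\end{align*}
where $\ket{\Phi^{(l)}}=2^{-l/2}\sum_{\alpha\in\{0,1\}^l}\ket{\alpha}_A\ket{\alpha}_B$ is a rank-$2^l$ maximally entangled state, i.e.\ $l$ perfect EPR pairs, in tensor product with the perfectly correlated retained registers $\ket{t,l}_A\ket{t,l}_B$. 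So, conditioned on the value $(t,l)$ of those registers, the output register holds precisely $l$ pristine EPR pairs, and nothing Alice and Bob subsequently do to the retained registers can degrade them.

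The count is then immediate. The number of EPR pairs produced is the random variable $L$, with $\Pr(L=L_k\mid T=t)=2^{L_k}/\binom{N}{t}$ --- exactly the distribution analyzed in the proof of Theorem \ref{thm:EliasExtraction}, since the weight of the $(t,l)$ sector above is $2^{l}\Pr(N,t)=\Pr(T=t)\,\Pr(L=l\mid T=t)$. Hence the expected yield is $\mathbb{E}[L]=H(\alpha_L\mid T,L)$, the very quantity Theorem \ref{thm:EliasExtraction} bounds below by $NH(p)-\log_2(N+1)-2$. Since $\rho=\Tr_B\proj{\psi}=p_0\proj{0}+p_1\proj{1}$ has von Neumann entropy $H(\rho)=H(p)$, this equals $NH(\rho)-\log_2(N+1)-2$, as claimed.

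The only delicate point --- the ``obstacle,'' such as it is --- is the claim that the retained $(T,L)$ registers factor cleanly off the EPR output. This fails for a generic reversible compressor (cf.\ the Huffman example of Section \ref{SecCompression}, where each output bit stays correlated with the tape length) and holds here precisely because Elias subdivides each type class into bins on which the amplitude is constant, making $\alpha_L$ a uniform superposition independent of all side information --- the quantum echo of the classical requirement, stressed in Section \ref{SecRandomness}, that an extracted bit be independent of everything. A minor bookkeeping remark is that $U$ must be extended from the injection $s\mapsto(T,L,\alpha_L)$ to an actual permutation on a large enough register, with some ancillary scratch; unlike Theorem \ref{ThmMain}, this theorem makes no space claim, so that extension can be as wasteful as convenient.
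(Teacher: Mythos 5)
Your proposal is correct and follows essentially the same route as the paper: decompose $\ket{\psi}^{\otimes N}$ by type $T$ and then by Elias bin $L$, observe that each bin of size $2^{L}$ carries a maximally entangled state of Schmidt rank $2^{L}$ (hence $L$ perfect EPR pairs), and invoke the $(T,L)$ distribution from Theorem \ref{thm:EliasExtraction} for the expected yield. The only cosmetic difference is that the paper phrases the $(T,L)$ decomposition via projective measurements whose outcomes Alice and Bob are guaranteed to agree on, whereas you keep everything coherent as a unitary relabeling with retained registers --- the underlying block structure and counting are identical.
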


\begin{proof}
Alice and Bob begin with the state
\begin{eqnarray*}
\ket\psi^{\otimes N} &=& \left(\sqrt{p}\ket{0_A0_B} + \sqrt{1-p}\ket{1_A1_B}\right)^{\otimes N} \\
&=& \sum_{s\in\{0,1\}^{N}}{\sqrt{\mathrm{Pr}(s)}\ket{s_A s_B}}
\end{eqnarray*}
where the probability $\mathrm{Pr}(s)$ of a string $s\in\{0,1\}^{N}$ containing $T(s)$ ``1''s is
$$\mathrm{Pr}(s) = p^{N-T(s)}(1-p)^{T(s)}.$$
Each type class $\cT_T$, labeled by its Hamming weight $T$, defines a type subspace spanned by $\ket{s_As_B}$ for all $s$ in the type class.  We can rewrite the joint state as a sum over type subspaces,
$$\ket\psi^{\otimes N} = \sum_T{ \sqrt{\mathrm{Pr}(T)} \frac{\sum_{s\in\cT_T}{\ket{s_A s_B}}} {\sqrt{\binom{N}{T}} } },$$
where $\mathrm{Pr}(T) = \binom{N}{T}p^{N-T}(1-p)^T$.  So if Alice and Bob both measure $T$, then they both obtain the \emph{same} value $\hat{T}$, which is distributed according to $\mathrm{Pr}(\hat{T})$.  Conditional on this measurement, they have
$$\ket\psi_{\hat{T}} = \frac{\sum_{s\in\cT_{\hat{T}}}{\ket{s_A s_B}}}{\sqrt{\binom{N}{\hat{T}}}},$$
which is a maximally entangled state of dimension $\binom{N}{\hat{T}}$.  They now divide the strings of type $\hat{T}$ into bins $\cL_L$ of size $2^L$.  The specific binning is entirely arbitrary, as long as Alice and Bob use the same one.  Alice and Bob's state is
$$\ket\psi_{\hat{T}} = \sum_L{\sqrt{\frac{2^L}{\binom{N}{\hat{T}}}} \frac{\sum_{s\in\cL_L}\ket{s_As_B}}{\sqrt{2^L}}}.$$
As with $T$ above, Alice and Bob can measure $L$ and be assured of getting the same answer $\hat{L}$.  Conditional on $\hat{L}$, they have
$$\ket\psi_{\hat{T},\hat{L}} = \frac{\sum_{s\in\cL_{\hat{L}}}{\ket{s_A s_B}}}{\sqrt{2^L}} = \left(\frac{\ket{0_A0_B} + \ket{1_A1_B}}{\sqrt2}\right)^{\otimes \hat{L}},$$
so Alice and Bob now share $\hat{L}$ EPR pairs (although they are still distributed over $N$ physical qubits).  The joint distribution $\mathrm{Pr}(T,L)$ is identical by inspection to the one in Theorem \ref{thm:EliasExtraction}, so expected yield is identical.
\end{proof}

\section{Streaming extraction}\label{SecStreaming}

Elias's protocol is a block algorithm; it operates on all $N$ qubits at once.  There have been relatively few attempts to design efficient sequential extractors.  Several authors (including Elias) have observed that single-shot protocols such as the Von Neumann or the Hoeffding-Simons protocol can be repeated indefinitely, but that they are far from optimal.  Elias suggested a quasi-sequential application of his protocol:  apply it to the first 2 input bits, then the next 4, then the next 6, etc, etc.  This is both strictly suboptimal for any $N$ (though it does approach the Shannon bound as $N\to\infty$), and memory-intensive as $N\to\infty$ (since the blocklength grows as $\sqrt{N}$).  Peres \cite{Peres92} showed how to iterate Von Neumann's protocol, recycling the entropy in bits that have already been used, but his protocol is not actually sequential.  Visweswariah et al.~\cite{VisweswariahIEEE98} suggested the use of variable-length source codes as extractors, but Hayashi \cite{HayashiIEEE08} subsequently pointed out that the output bits are not quite randomly distributed (see also our discussion above of the problems this raises for entanglement concentration).

Our first goal is to construct a sequential extractor that achieves the Shannon bound (in fact, a streaming implementation of Elias's protocol).  That is, we wish to construct an algorithm that reads bits one at a time, performing some processing and outputting random bits as they are produced, before reading the next bit.  Further, when $N$ bits have been read, for any given $N$, our protocol extracts the same amount of randomness as Elias's block protocol.  We will first assume we have applied Elias's protocol to a block of size $N-1$, and investigate how to extract the extra randomness produced by adding one more input bit.

\subsection{Serializing Elias's protocol}

We have seen that Elias's protocol represents an $N$-bit input string $s$ as $(T,L,\alpha_L)$, where $T$ describes the type, $L$ represents the bin $\mathcal{I}_L$ within the type, and $\alpha_L$ the $L$-bit index within $\mathcal{I}_L$.    The index $\alpha_L$ consists of $L$ perfectly random bits, and forms the output of the protocol.  A particular implementation of the protocol provides a particular mapping from the strings within a given type to the index pair $(L,\alpha_L)$.  In order to construct a streaming implementation, we first note that a mapping for $N$-bit strings may be constructed in a convenient way from the mapping for $(N-1)$-bit strings.  Suppose that the $N-1$ input bits $(b_1\ldots b_{N-1})$ have already been transformed into $(N-1,T_0,L_0,\alpha_{L_0})$ by Elias's protocol, and the $L_0$ random bits represented by $\alpha_{L_0}$ have been emitted.  We want to add one more bit $b_N$, updating the transformation as $(N-1,T_0,L_0,\alpha_{L_0})\to(N,T,L,\alpha_{L})$.  Since a streaming protocol acts on strings of different lengths, we also keep track of $N$, the number of bits read so far.  We will now describe this procedure in more detail.

%  For a block protocol, $N$ is specified from the start.  A sequential protocol acts on input strings of different lengths, and thus we also keep track of the number of bits $N$ read so far.  It follows that any possible input string read so far may be completely specified by $(N,T,L,i_L)$.

Recall that each of the $\binom{N}{T}$ strings of type $(N,T)$ can be obtained \emph{either} by adding a ``0'' to one of the strings of type $(N-1,T)$, or by adding a ``1'' to one of the strings of type $(N-1,T-1)$.
The strings of type $(N-1,T)$ have been sorted into bins such that bin $L$, if present, contains $2^L$ strings, and no two bins have the same size.  Similarly for the strings of type $(N-1, T-1)$.  When we find ourselves in a bin $L$, that means we have already outputted $L$ random bits.  Except for the value of those random bits, we treat all strings in the bin identically.  We don't want any two bins to be the same size, because in that case, we could combine the two bins into a single bin of twice the size, allowing us to output an additional random bit.  When we add an extra input bit and find ourselves now in type $(N,T)$, we wish to see if we can merge any bins, thus producing additional random output bits.

The sizes of the types satisfy a recursion rule:
\begin{equation}
\binom{N}{T} = \binom{N-1}{T} + \binom{N-1}{T-1}.
\end{equation}
Upon reading a new bit we update $N$ and $T$ to correspond to the new number of bits read and the new type.  We also wish to use the mapping into bins and indices $(L,\alpha_L)$ for $(N-1)$-bit strings to define one for $N$-bit strings.  Denoting
\begin{eqnarray}
\binom{N-1}{T} &=& \sum_j 2^{L_j}, \\
\binom{N-1}{T-1} &=& \sum_k 2^{L_k^{\prime}}, \\
\binom{N}{T} &=& \sum_i 2^{L_i^{\prime \prime}}
\end{eqnarray}
we obtain
\begin{equation}
\sum_i 2^{L_i^{\prime \prime}} = \sum_{j} 2^{L_j} + \sum_k 2^{L_k^{\prime}}.
\end{equation}
This is simply binary addition, and the rules of binary addition also tell us how to update the bins $L$ and indices $\alpha_L$.  If both the types $(N-1,T-1)$ and $(N-1,T)$ have a bin of size $2^L$, we can merge them, outputting a new random bit.  This gives us a new ``carry bin'' of size $2^{L+1}$, corresponding to the carry bit.  Perhaps we can merge this bin as well with another bin, producing another random bit and a new carry bin, and so on.

To construct a streaming implementation of Elias' protocol, we simply read bits one at a time, performing the above processing at each step.  It is easily verified that at $N=2$ the above performs von Neumann's protocol, while for $N>2$, by induction, following these rules gives an implementation of Elias' protocol for each $N$.  The rules for what to do upon reading the $N$-th bit are defined by the triplet $(N-1,T,L)$, along with the bit just read.  In particular, they do not depend on the index $\alpha_L$ that identifies a particular string.  So, since the $L$ bits of $\alpha_L$ are not needed to process subsequent bits, they can be ejected as soon as they are produced.  

For every input string that causes the $n$th output bit to be ``0'', there is a matching string that (a) produces \emph{exactly} the same memory state, (b) produces \emph{exactly} the same output bits except for the $n$th one, and (c) yields a ``1'' for the $n$th output bit.  This guarantees that the output bits are unbiased \emph{and} uncorrelated with the memory.  The memory state is completely specified by the three integers $(N,T,L)$, so the processor's memory need grow only as $O(\log(N))$.

\subsection{Implementation}

This implementation may be conveniently represented by the lattice shown (up to $N=5$) below.  Each possible input string corresponds to a different path through the lattice, and red dots indicate the fusion of two paths into a single node by outputting an unbiased bit.

\begin{center}
\includegraphics[width=\HPW]{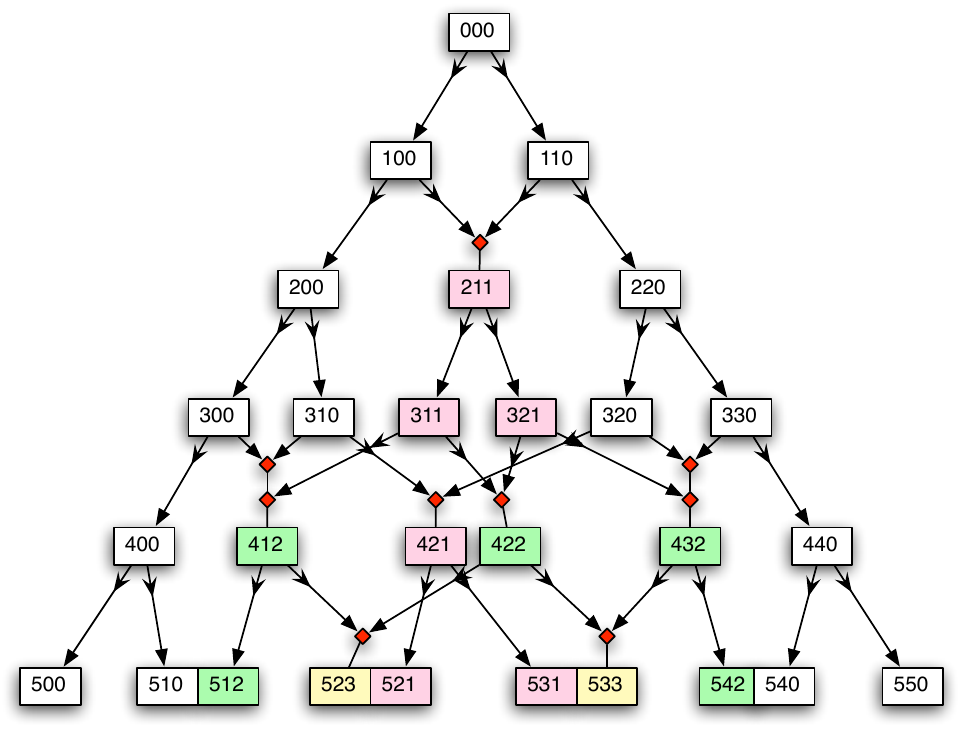}
\end{center}

The nodes are labeled by three integers:

\begin{itemize}

\item $N$, the number of input bits read so far,

\item $T$, the Hamming weight of the input string,

\item $L$, the number of random bits output so far.

\end{itemize}
Note that this lattice is simply the lattice corresponding to Pascal's triangle, that is with each node representing a different type class, but with the nodes $(N,T)$ subdivided into $\{(N,T,L)\}$ for each value of $L$ in the binary expansion of $\binom{N}{T}$.  Since $(N,T,L)$ represents a collection of $2^L$ strings with the same probability, $L$ will represent the number of random bits output so far.  The procedure in the previous section tells us how to traverse the lattice.  We can formalize the lattice traversal with the following rules.

\begin{protocol} \label{Prot1}
This protocol runs on a machine with three integer memory registers labeled $N$, $T$, and $L$.  Define $\binom{N}{T}_L$ to be the $L$th bit of $\binom{N}{T}$.

%{\tt
%\vspace{0.025in}\noindent 10 Read a bit $b$ from the input stream.

%\vspace{0.025in}\noindent 20 Update $N\to N+1$ and $T\to T+b$.

%\vspace{0.025in}\noindent 30 \textbf{IF} $b\neq\mathrm{empty}$, \textbf{THEN}

%\vspace{0.025in}\noindent 40 \mbox{\ \ \textbf{IF} $\binom{N}{T}_L=1$ and $\binom{N-1}{T-1+b}_L = 0$, \textbf{THEN} goto 10.}

%\vspace{0.025in}\noindent 50 \ \ \textbf{ELSE} output $b$, set $b\to\mathrm{empty}$ and $L\to L+1$,

%\noindent\hspace{0.3in} and goto 30.

%\vspace{0.025in}\noindent 60 \textbf{ELSE}

%\vspace{0.025in}\noindent 70 \ \ \textbf{IF} $\binom{N-1}{T}_L = \binom{N-1}{T-1}_L$, \textbf{THEN} goto 10.

%\vspace{0.025in}\noindent 80 \ \ \textbf{ELSE} output $\binom{N-1}{T}_L$, update $L\to L+1$,

%\noindent\hspace{0.3in} and goto 10.
%}

{\tt
\noindent\ 1 \textbf{WHILE}( input stream not empty ) \textbf{DO} 

\vspace{0.035in}\noindent\ 2 \{

\vspace{0.035in}\noindent\ 3\hspace{0.2in}Read a bit $b$ from the input stream.

\vspace{0.035in}\noindent\ 4\hspace{0.2in}Update $N\to N+1$ and $T\to T+b$.

\vspace{0.035in}\noindent\ 5\hspace{0.2in}\textbf{IF}( $\binom{N}{T}_L=0$ or $\binom{N-1}{T-1+b}_L = 1$ )

\vspace{0.035in}\noindent\ 6\hspace{0.2in}\{

\vspace{0.035in}\noindent\ 7\hspace{0.4in}output $b$ and set $L\to L+1$.

\vspace{0.035in}\noindent\ 8\hspace{0.4in}\textbf{WHILE}( $\binom{N-1}{T}_L \neq \binom{N-1}{T-1}_L$ )
%\item \textbf{IF} $\binom{N-1}{T}_L \neq \binom{N-1}{T-1}_L$,\\ \textbf{THEN} 

\vspace{0.035in}\noindent\ 9 \hspace{0.6in}output $\binom{N-1}{T}_L$ and set $L\to L+1$.

\vspace{0.035in}\noindent10 \hspace{0.2in}\}

\vspace{0.035in}\noindent11 \}
}

\end{protocol}
\noindent\textbf{Discussion:}  When stated concisely, the protocol is a bit cryptic, so here is an explanation of how (and why) it works.  First, note that the protocol as given runs in ``fully streaming'' mode -- i.e., it continues to read and write bits indefinitely.  To make it run in ``on-demand'' mode, we change each instance of ``output $x$'' to ``output $x$ and then pause.''  It's \emph{not} sufficient to pause before line 3, because (significantly) there is not a 1:1 correspondence between reading and outputting bits.

The basic idea here is to read bits until the algorithm arrives at an internal state $(N,T,L)$ that could have been reached via two different paths with equal probability.  One path comes from $(N-1,T-1,L)$ by reading $b=1$, while the other comes from $(N-1,T,L)$ by reading $b=0$.  Since $b$ identifies the path, and the two paths are equally probable, $b$ is perfectly random.  So the algorithm spits it out.

This simple picture gets complicated because of \emph{carrying}.  The nodes are in 1:1 correspondence with bits of binomial coefficients.  The existence of two paths coming from $(N-1,T-1,L)$ and $(N-1,T,L)$ means that the $L$th bits of $\binom{N-1}{T-1}$ and $\binom{N-1}{T}$ are both 1.  Adding them produces a carry bit in column $L+1$.  Fusing the corresponding paths produces a ``carry path'' corresponding to node $(N,T,L+1)$.  If that node could have been reached in another way (from either or both of $(N-1,T-1,L+1)$ or $(N-1,T,L+1)$), then we need to fuse some more paths.

The algorithm begins by reading a bit and updating its internal state.  Line 5 checks to see whether the resulting internal state could have been reached in at least two ways.  If not, then no perfectly random bit is available, and it reads another bit.  How is this check performed?  Each node can be reached via one, two, or three paths.  $(N,T,L)$ has exactly one path leading into it if $\binom{N}{T}_L=1$ (meaning there's either one or three paths in), \emph{and} $\binom{N-1}{T-1+b}_L=0$ (meaning there's only one non-carry path, and thus no more than two paths in total).  So if either of these is false, then there are two or three paths leading in.   

This could actually happen in three different ways.  There could be:
\begin{enumerate}
\item Two non-carry paths,
\item One non-carry path and one carry path,
\item Three paths.
\end{enumerate}
By design, the algorithm outputs $b$ in all three sub-cases (Line 7).  It also updates $L\to L+1$, then proceeds to deal with the resulting carry path into $(N,T,L+1)$.  

There is some freedom in how to deal with carry paths, but the rule embodied by Line 7 eliminates all of it.  Recall that output bits have to be uniformly random.  If a node has two non-carry paths (and no carry path) leading in, then we're in good shape -- those paths have identical probability, but different values of $b$, so the output bit is random.  If there is only one non-carry path, and it entailed reading in $b=\hat{b}$, then the corresponding carry path \emph{must} output $1-\hat{b}$.  Finally, if there are three paths, then only two of them can fuse.  By outputting $b$, we are choosing to fuse the two non-carry paths -- so the corresponding carry path must \emph{not} output anything.

Lines 8-9 ensure that the carry path is managed correctly.  When the algorithm finds itself in $(N,T,L)$ via a carry path, there are two cases:
\begin{enumerate}
\item If there are either 0 or 2 non-carry paths into $(N,T,L)$, then there's no path to fuse with, so the algorithm should output nothing and read another bit.  This is true if the $L$th bits of $\binom{N-1}{T}$ and $\binom{N-1}{T-1}$ are the same (both zero means 0 non-carry paths, while both 1 means 2 non-carry paths).
\item If those bits are different, then there is exactly 1 non-carry path into $(N,T,L)$.  If that path came from $(N-1,T,L)$, then its last input bit would have been $\hat{b}=0$.  To fuse with that path, our algorithm should output a 1.  Otherwise, the non-carry path came from $(N-1,T-1,L)$, its last input bit would have been $\hat{b}=1$, and the algorithm should output a 0.  In either case, $\binom{N-1}{T}_L$ gives the correct output.  Outputting a bit yields another carry path, so the \texttt{WHILE} statement ensures that we loop around to line 8 and deal with it in turn.
\end{enumerate}

We are going to use this algorithm as an entanglement concentration protocol, so it has to be completely reversible.  In the description above, inputs and outputs are asynchronous -- and the algorithm generally has to read bits at a higher rate than it can output EPR pairs.  If these bits are physical systems (e.g., qubits), where are they going?  

To clarify this, we assume that the machine has access to three I/O bitstreams or ``tapes''.  The \emph{input tape} is read-only, the \emph{output tape} is write-only, and the \emph{purity tape} is a read/write stack that functions as a reservoir of clean ``0'' bits.  Now the protocol is explicitly reversible: of $N$ input bits, $n$ will be pushed onto the output tape, and $N-n$ will be pushed onto the purity tape.  However, upon reading in a bit $b$, the protocol may pop one or more bits off the purity tape, write random bits onto them, and push them onto the output tape (line 80).  On the other hand, it may also erase $b$ (i.e., reversibly set it to ``0'') and push it onto the purity tape (line 40).  Lines 50 and 70 do not require any action on the purity tape.

We can summarize this construction as a theorem, whose proof is the preceding analysis:
\begin{theorem}
Protocol \ref{Prot1}, applied to a series of $N$ bits, implements Elias's protocol for optimal randomness extraction.
\end{theorem}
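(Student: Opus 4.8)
The plan is to argue by induction on the number of bits read, with a sharpened inductive hypothesis: after Protocol~\ref{Prot1} has consumed the first $N$ input bits, the registers hold $(N,T,L)$, where $T$ is the Hamming weight of the string read so far and $\mathcal{I}_L$ is the Elias bin (within type $(N,T)$) containing that string's index $\alpha$, and the bits emitted so far are exactly $\alpha_L$, the $L$-bit index of the string within $\mathcal{I}_L$. Because the proof of Theorem~\ref{thm:EliasExtraction} depends only on the joint law of $(T,L,\alpha_L)$, establishing this invariant for every $N$ shows that Protocol~\ref{Prot1} extracts precisely as much randomness as Elias's block protocol on every prefix, which is the claim.

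The base case is $N\le 2$: for $N\le 1$ nothing is emitted and the statement is vacuous, while at $N=2$ one checks directly that the four $2$-bit strings land on the lattice nodes and emit the bit(s) prescribed by von Neumann's protocol, which is the $N=2$ instance of Elias's protocol, as already observed in the text. For the inductive step, fix the state $(N-1,T_0,L_0)$ and a freshly read bit $b$, so that line~4 moves us to type $(N,T)$ with $T=T_0+b$. By Pascal's recursion $\binom{N}{T}=\binom{N-1}{T}+\binom{N-1}{T-1}$, the strings of type $(N,T)$ split canonically into $0$-extensions of type $(N-1,T)$ and $1$-extensions of type $(N-1,T-1)$; the Elias bins of the two smaller types, whose sizes are the powers of two in the respective binary expansions, can therefore be amalgamated into a valid Elias binning of $(N,T)$ precisely by schoolbook binary addition of those two expansions, each carry corresponding to the fusion of two equal-size bins $2^L\mapsto 2^{L+1}$.

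The crux is to verify that lines~5--9 execute exactly this addition while preserving the emitted-index invariant. I would: (i) justify, via the path-counting in the Discussion, that line~5 fires iff the node $(N,T,L_0)$ has at least two incoming paths, i.e.\ iff a carry is generated at position $L_0$; (ii) show that appending $b$ in line~7 keeps the invariant, by exhibiting for the current state the explicit ``twin'' string that reaches the same node $(N,T,L_0+1)$ with the same earlier outputs but the opposite value of $b$ --- the twin has the same type, hence equal probability, so the appended bit is uniform and independent of the memory, exactly what is needed for $\alpha$ to be uniform on the merged bin; and (iii) handle the carry chase in lines~8--9 by showing that, arriving at $(N,T,L)$ along a carry path, the $L$th bits of $\binom{N-1}{T}$ and $\binom{N-1}{T-1}$ are equal iff there is no non-carry path to fuse with (so the addition stops, consistent with $\binom{N}{T}_L$), and differ iff there is exactly one non-carry path, in which case $\binom{N-1}{T}_L$ is the unique output value that fuses the carry path onto it --- the non-carry path has a determined last input bit, and the carry path must emit its complement. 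Chaining the twin-string argument along the carry chain shows that every emitted bit, not just the first, is uniform and uncorrelated with the prior output and the final memory.

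The step I expect to be the main obstacle is precisely this carry bookkeeping: proving that when the \texttt{WHILE} loop of lines~8--9 terminates, the final $L$ is the Elias exponent of the $N$-bit string, that the bin sizes produced along the way are distinct powers of two summing to $\binom{N}{T}$, and that the entire block of bits emitted during one pass of the outer loop is a uniform index into a bin whose size is forced by binary arithmetic rather than by anything the string ``chose.'' I would dispatch this by treating each fusion as an involution pairing strings with identical downstream behaviour, so that the composition of all fusions in one pass is again a uniform-index map; reassembling these gives the invariant at $N$, closing the induction. Comparing the resulting $(N,T,L,\alpha_L)$ decomposition with the one in the proof of Theorem~\ref{thm:EliasExtraction} then yields the theorem.
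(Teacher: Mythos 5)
Your proposal is correct and takes essentially the same route as the paper: the paper's proof of this theorem is explicitly ``the preceding analysis,'' namely the inductive serialization in which Pascal's recursion $\binom{N}{T}=\binom{N-1}{T}+\binom{N-1}{T-1}$ is realized as binary addition of the bin decompositions, carries correspond to fusions of equal-size bins, and the Discussion following Protocol~\ref{Prot1} carries out exactly your case analysis of lines 5--9 (two non-carry paths, one non-carry plus one carry path, three paths, and the carry chase). Your ``twin string'' argument is likewise the paper's observation that every input string causing a given output bit to be $0$ is paired with one producing the identical memory state and remaining outputs but a $1$, which is what guarantees the emitted bits are unbiased and uncorrelated with the memory.
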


\subsection{Performance}

The algorithm described above is a sequential protocol for extracting perfectly random bits.  But how well does it work?

We begin by noting that our algorithm is sequential, but not instantaneous.  A truly instantaneous protocol (like Huffman coding) is Markovian.  Its action on a given input symbol does not depend on previous symbols, so it requires no memory from one symbol to the next.  If the output is modeled as a tape, the algorithm needs to ``remember'' where it is on the tape, but an instantaneous protocol makes no additional use of this information.

Our algorithm requires a memory register whose size grows as $\log N$.  However, \emph{any} protocol that emits uncorrelated asymptotically perfectly random bits \emph{and} achieves the Shannon bound must have a memory that grows with $N$.  Of the $NH(p)$ bits of entropy associated with the first $N$ input bits, $\sim\log N$ bits are associated with the Hamming weight, and cannot yet be distilled into perfectly random bits.  This entropy must be either:
\begin{enumerate}
\item written down on the output tape,
\item discarded, or
\item kept in memory until (with the addition of subsequent bits) it becomes distillable.
\end{enumerate}
The first solution ensures that some output bits are not perfectly random.  The second solution prohibits achieving the Shannon bound.  The third solution requires a memory whose size grows as $O(\log N)$ (and a non-Markovian protocol).

Our protocol is reversible, so it discards no entropy at all.  It therefore not only achieves the Shannon bound, but does so very tightly -- the total amount of randomness extracted from $N$ input bits is $NH(p) - O(\log N)$, which follows immediately from reversibility and the bounded size of the memory.  Furthermore, it also efficiently extracts purity, which in certain circumstances may be more useful than randomness.  We note that the Schulman-Vazirani cooling algorithm \cite{Schulman99} is also constructed from classical randomness extraction protocols, but is not streaming as it makes use of Peres' iterative von Neumann protocol.  Again, because the memory for our algorithm is so small, we know that on average $N(1-H(p))$ pure bits will be ejected.  Note that all of these figures are average values -- in any given experiment, the yield of random and pure bits will fluctuate by $O(\sqrt{N})$.

\subsection{Extracting entanglement}

This reversible protocol for extracting random bits can be adapted rather easily for entanglement concentration.  The only extra necessity is that Alice and Bob must implement the protocol not just reversibly, but also coherently (i.e., on a quantum information processor \footnote{We use ``quantum information processor'' rather than ``quantum computer'' because a quantum computer is implicitly scalable.  A nonscalable device limited to 100, 10, or even just 2 qubits is a quantum information processor \cite{RBKFoundPhys02}.  Our algorithm requires only $O(\log N)$ qubits to process $N$ input bits.  So, a quantum information processor comprising about 30 qubits (within the reach of current ion trapping technology) could concentrate a kilobit of entanglement -- and one with 100 qubits could concentrate over a terabit of entanglement.}).  The data registers must be quantum registers that can support superposition states without decohering, and the logic gates must preserve quantum superposition.  Moreover, each ``if-then'' statement in the algorithm must be implemented as a controlled operation, e.g. a quantum CNOT gate, rather than involving a measurement and conditioning on that measurement.

Suppose that a source produces pairs of systems one at a time in the joint (Alice-Bob) state
$$\ket\psi = \alpha\ket{0_A0_B}+\beta\ket{1_A1_B}.$$
The \emph{reduced} state of a single qubit on either Alice or Bob's side is
$$\rho = |\alpha|^2\proj{0} + |\beta|^2\proj{1}.$$
Suppose Alice and Bob each run our protocol coherently on their streams of qubits.  After $N$ input bits have been read, our streaming protocol has implemented Elias's block protocol on them.  Therefore, by Theorem \ref{thm:EliasConcentration}, it outputs perfect EPR pairs when performed coherently.  However, it is also instructive to consider why each output pair, considered individually, is maximally entangled.

Locally, Alice and Bob will each see output streams of maximally mixed qubits,
$$\rho_{\mathrm{out}} = \frac12\proj{0} + \frac12\proj{1}.$$
To show that all the entropy comes from entanglement -- i.e., Alice's $n$th output qubit forms an EPR pair with Bob's $n$th qubit -- let us consider just the first output qubit.
\begin{enumerate}
\item Alice's and Bob's input bits are perfectly correlated, and since the computational paths of their algorithms depend only on these input bits, their first output bit is perfectly correlated as well.  That is, if we were to measure Alice's first qubit and find it in the $\ket{0}$ state, then we would surely find the same result if we measured Bob's first qubit.
\item The algorithms that Alice and Bob run are completely reversible.  They involve no measurements and no outside randomness.  Furthermore, their joint input states are pure and thus carry no entropy at all.  Thus, given that their first output bits are perfectly correlated, these bits must form an EPR pair \emph{unless} they are decohered by some other system.  Such a system would have to be correlated with Alice or Bob's first output qubit, and it would have to be either another output qubit or a qubit still stored in memory.
\item The algorithm can be configured (as discussed above) to pause after outputting exactly one bit.  Thus, we can consider the first output qubit when there are no other output qubits, and so we can rule out the possibility that the first EPR pair is decohered by another output qubit.
\item The memory registers of both Alice and Bob's processors are uncorrelated with the state of the first output bit.  This follows quite simply from the way we built the protocol:  each path that outputs $\ket{0}$ is balanced with another path of the same length and the same probability that outputs $\ket{1}$.  Furthermore, by outputting a qubit, the protocol explicitly forgets which path it traversed.  Thus, while Alice and Bob's processors are each in a complicated superposition of different computational basis states (and are in fact highly entangled with each other), neither is even slightly correlated with the value of the first output bit.
\end{enumerate}
This shows that Alice and Bob's first output qubits form an EPR pair.  This EPR pair is utterly uncorrelated with anything else, particularly the memories of Alice and Bob's processors.  It follows that when Alice and Bob distill out their second qubits, they too are perfectly correlated with each other, and uncorrelated with anything else -- and therefore form an EPR pair, as do all subsequent pairs.

We conclude this section by pointing out a limitation of the algorithm presented so far.  It's basically a classical algorithm, adapted to run on a quantum computer in the computational basis.  Thus, it assumes and relies upon Alice and Bob's input states being diagonal in the computational basis.  Of course, if the input states were instead
$$\ket\psi = \alpha\ket{++}+\beta\ket{--},$$
then we could modify the algorithm very simply -- just perform an $SU(2)$ rotation on each input qubit to change the Schmidt basis.  However, we \emph{must} know the Schmidt basis of the input states.  Our algorithm (as presented so far) is a streaming implementation of the protocol originally introduced by Bennett et al in 1996 \cite{BennettPRA96}.  In Section \ref{SecSchur}, however, we show how to lift this requirement, constructing an algorithm for truly universal streaming entanglement concentration, which doesn't require any advance knowledge of the joint state (except a promise that it's pure).

\subsection{Computational complexity}

Let us now consider the resources necessary to implement our protocol.  One of the main advantages of a streaming protocol over a block protocol is reduced memory usage.  The streaming protocol doesn't need to store the entire input block of $N$ qubits!  Instead, our protocol requires three integer registers for $N$, $T$, and $L$.  Each register must be fully quantum (i.e., capable of storing arbitrary superpositions of integers), but only $\log(N)$ bits in size, since both $T$ and $L$ are less than or equal to $N$.

Our algorithm also requires some temporary storage to calculate its transitions between memory states.  Most of this calculation is trivial and can be done using $O(1)$ qubits.  The one major exception is calculating $\binom{N}{T}_L$.  Each iteration of the algorithm has to calculate the $L$th bit of two binomial coefficients.  This is nontrivial.  In fact, at first glance it looks almost impossible, since $T$ is typically $O(N)$, and
$$\binom{N}{T} = \frac{N!}{T!(N-T)!}$$
is $O(N)$ bits in size.  Calculating it involves $O(N)$ multiplications and divisions of integers with $O(\log N)$ bits each, and storing the result requires $O(N)$ bits of memory.

Fortunately, we only need to compute a single bit of $\binom{N}{T}$.  This removes any need to store a number with $O(N)$ bits.  We can then take either of two routes (depending on which is more convenient) to run the algorithm in $O(\log N)$ qubits of memory.
\begin{enumerate}
\item We can run the entire algorithm -- including computing bits of binomial coefficients -- on a quantum processor, with no classical assistance at all.  This turns out to be possible because computing the $L$th bit of $\binom{N}{T}$ is in the complexity class LOGSPACE.  Thus, temporary memory requirements can be held to $O(\log N)$.  However, this makes the algorithm design much more complicated, and may slow it down substantially (since we trade time for space).
\item We can precompute the binomial coefficients with a classical processor.  If we have $\mathrm{poly}(N)$ classical memory, then this can be done relatively quickly, and the results used to implement the quantum protocol.  The trick here is that the classical computer cannot know the values of $N$, $T$, and $L$ -- if it did, it would decohere the computation.  So the classical computer has to calculate \emph{all} of the $O(N^2)$ possible binomial coefficients.  Though clumsy, this approach is probably more practical for moderate $N$, and minimizes the amount of quantum computation necessary.
\end{enumerate}

Computing binomial coefficients is in LOGSPACE because division and iterated multiplication are both in LOGSPACE \cite{HesseJCSS02}.  The quotient of two $N$-bit numbers, or the product of $N$ $N$-bit numbers, can be computed in $O(\log N)$ space.  $N!$ is the product of $N$ numbers whose size is $\log N$ bits, so we can compute it in LOGSPACE.  Three such computations yield $N!$, $T!$, and $(N-T)!$, and computing the binomial coefficient involves two divisions.

This may seem paradoxical -- how can an $N$-bit number be computed in $O(\log N)$ bits of space?  We are allowed a machine with $O(\log N)$ read-write memory, plus an unbounded read-only tape containing the problem specification (e.g., the $N$-bit numbers to be divided, or the $N$ numbers to be multiplied), and an unbounded write-only tape on which the answer will be written out.  This model is very adaptable to our problem.  To calculate the $L$th bit of $\binom{N}{T}$, we chain three such machines together.

\begin{center}\includegraphics[width=2.5in]{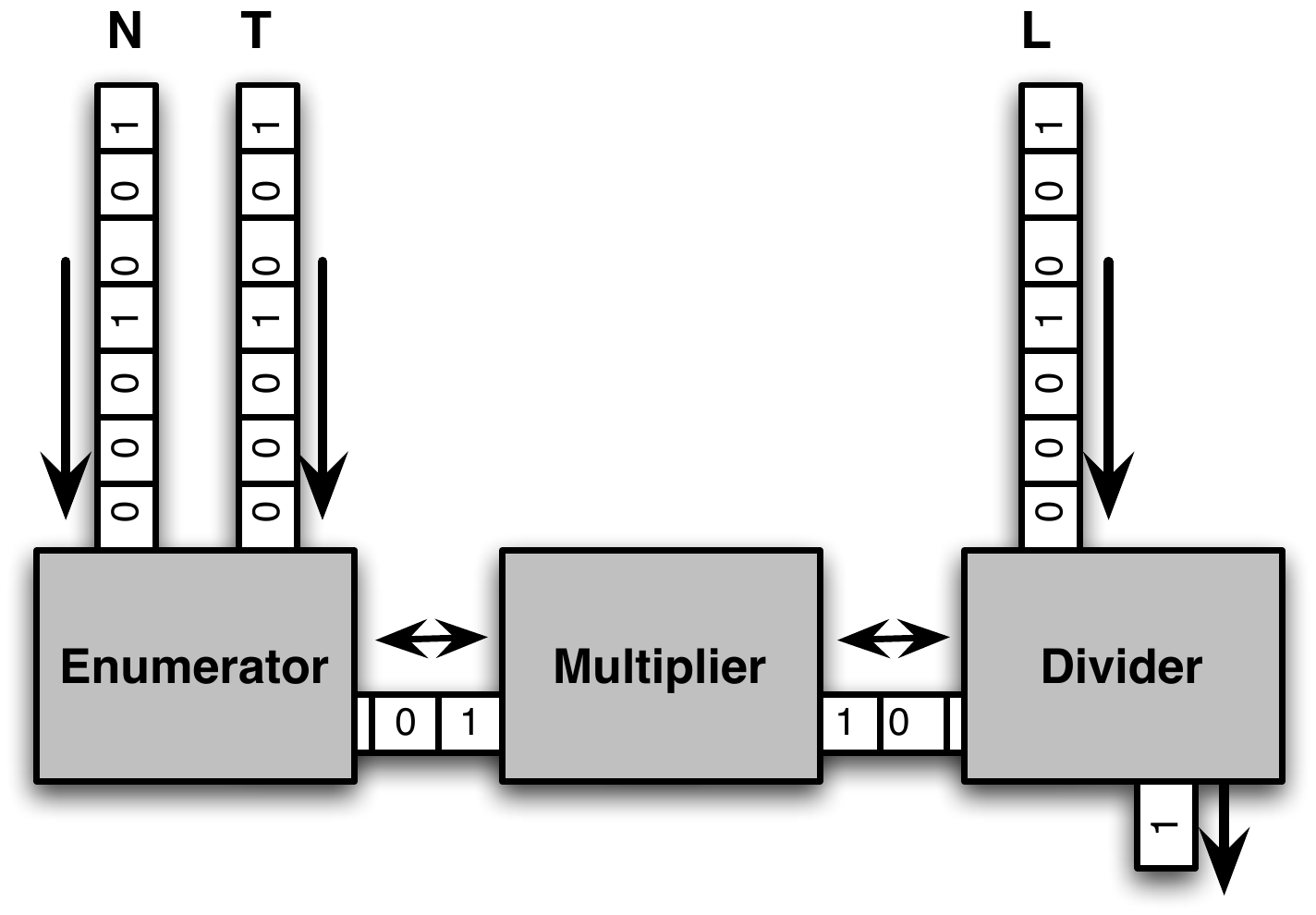}\end{center}

The first has a $\log N$-sized input tape containing $N$ and $T$.  It constructs (and passes to the second machine) two long lists of integers: $\{1\ldots N\}$ and $\{1\ldots T,1\ldots N-T\}$.  This is easy to do with $O(\log N)$ memory.  The second machine multiplies together the numbers in these lists, and passes the results to the third machine as two integers, $N!$ and $T!(N-T)!$.  The third machine divides these numbers, calculating only the $L$th bit of the result, and outputs $\binom{N}{T}_L$.

Communication between the machines is accomplished via queries.  Instead of reading a long read-only tape, the second and third machines tell their predecessor which bit of a ``virtual tape'' they need, and the predecessor computes it on the fly.  This trades time for space, avoiding the need for an $O(N\log N)$ memory tape, at the cost of extra time complexity.

We do not know the time complexity of this approach, but it seems unlikely to be low.  Ideally, a streaming protocol would process each input symbol in $O(1)$ time, processing all $N$ symbols in $O(N)$ time.  This is manifestly impossible for an adaptive protocol, which has to maintain and process some record of what it's read so far.  The size of that record grows as $O(\log N)$,  which suggests a lower bound of $\Omega(N\mathrm{polylog}N)$ for processing $N$ symbols (since processing the $N$th symbol involves a polynomial-sized computation on a memory of size $\log N$).

We do not know whether this can be achieved, but the straightforward approach given above certainly doesn't.  Processing the $N$th symbol involves computing $\binom{N}{T}_L$, and doing this in with minimal space takes at least $O(N\mathrm{polylog} N)$ time.  This is itself only a lower bound.  Because the various stages in the computation query each other, we can only say that the time complexity of computing $\binom{N}{T}_L$ this way is $O(\mathrm{poly}(N))$.  

Reducing the per-symbol cost from $O(\mathrm{poly}(N))$ to $O(\mathrm{polylog} N)$ would make our protocol much more useful in practice.  To do so, we need to avoid computing \emph{all} the bits of $N!$ to get just one bit of $\binom{N}{T}$.  However, a similar problem -- computing $N!\ \mathrm{mod}\ P$ (where $P>N$) -- is thought to be hard.  An $O(\mathrm{polylog}(P))$ algorithm would yield an efficient algorithm for integer factoring.  So finding fast ways to exactly compute bits of binomial coefficients, while theoretically interesting, is probably not the best way to go about this.  A more promising approach is to approximate $\binom{N}{T}$ to fixed (or $O(\log N)$) precision, e.g. with Stirling's approximation.  Because $L$ is exponentially distributed, $L$ will almost always be very close to $L_{\mathrm{max}} = \lfloor \log_2\binom{N}{T}\rfloor$, and so computing only the most-significant $K$ bits of $\binom{N}{T}$ should induce an error of at most $2^{-K}$.  Formally, this precludes actually achieving the Shannon bound -- but in practice, such tiny (and controllable) deviations are insignificant.  A similar approach is almost always used in arithmetic coding, where the use of finite precision reduces computational complexity at the price of a tiny loss in compression efficiency.

A more practical approach is to offload as much computation as possible onto a classical computer.  While not necessarily a good long-term strategy, this is a promising solution as long as quantum memory is limited and precious.  To process the $N$th bit this way, we use the classical computer to loop over \emph{every} value of $T$ and $L$.  It computes $\binom{N}{T}_L$, uses this to design a unitary circuit, and then applies that unitary \emph{conditional} on $\proj{T,L}$.  Since there are at most $N$ possible values of $T$ and $L$, and the unitary acts on a register of size $O(\log N)$, the time required to process a single input symbol is $O(N^2\mathrm{polylog(N)})$.  This is undeniably ugly, but provides a simple constructive approach to implementing our algorithm in a bounded amount of time.

\section{A fully quantum protocol:  the streaming Schur transform}\label{SecSchur}

The algorithm that we presented in the previous section requires Alice and Bob to know something about the state $\ket{\psi}$ describing their systems.  Specifically, they need to know the Schmidt basis, which we've written (without loss of generality) as $\{\ket{0},\ket{1}\}$, where
\begin{equation}
\ket{\psi} = \alpha\ket{0_A0_B}+\beta\ket{1_A1_B}.
\end{equation}
It is this knowledge of the Schmidt basis that reduces the problem to classical randomness extraction.  Note, however, that Alice and Bob do \emph{not} need to know $\alpha$ and $\beta$.  Our protocol is classically universal (i.e., independent of the probabilities $|\alpha|^2,|\beta|^2$), but not quantumly universal.  In this section, we fix this problem and generate a completely universal streaming protocol, by incorporating the quantum Schur transform.  The resulting algorithm is a streaming implementation of Matsumoto and Hayashi's optimal block concentration protocol \cite{MatsumotoPRA07}.

\subsection{Quantum types, representation theory, and Schur-Weyl duality}

The algorithm that we developed in previous sections performs a particular transformation on strings.  It divides the $N$-bit input string into a permutation-invariant \emph{type}, and an index $\alpha\in\left[0\ldots\binom{N}{T}-1\right]$ into the type class.  (All the complicated business with $L$ is necessary only because we want to efficiently convert $\alpha$ into random bits).  This transformation is used frequently in classical information theory, where it gives rise to the \emph{method of types}.  Its usefulness arises because we are dealing with a permutation-invariant distribution over input strings, so separating out the permutation-invariant part is handy.  Furthermore, the index $\alpha$ isn't just permutation-dependent; it's \emph{uniformly} random when the input is permutation-invariant.  This is because the permutation group $S_N$ acts transitively on type classes -- i.e., given any two strings $s,s'$ in a type class, there is a permutation that transforms $s\to s'$.

In the absence of a preferred basis, we can't apply the classical method of types directly.  Instead, our algorithm must deal with arbitrary vectors in the Hilbert space of $N$-qubit quantum strings, $\mathcal{H} = \left(\mathcal{H}_2\right)^{\otimes N}$.  Fortunately, there is an analogous method of quantum types \cite{HarrowThesis}, and a corresponding transformation on quantum strings that divides them into a permutation-invariant ``type'' and an ``index'' into that type class.  This transformation is the Schur transform \cite{BaconPRL06}, and after introducing it in this section, we'll show how to combine it with our randomness-distillation protocol.

We can apply permutations to $N$ qubits, just like $N$ classical bits.  Each of the $N!$ permutations in the symmetric group $S_N$ is represented by a $2^N\times 2^N$ unitary operator acting on $\mathcal{H}$.   These operators form a \emph{representation} of $S_N$.  This representation is reducible, meaning that $\mathcal{H}$ can be divided into a direct sum of subspaces $\mathcal{H}_k$, each closed under the action of every permutation in $S_N$.  These subspaces are \emph{irreducible} representation spaces, a.k.a ``irreps'', of $S_N$, and they are the quantum equivalent of type classes.

The analogy between classical and quantum types is not as straightforward as one might think from the previous paragraph.  To see this, let's consider the simplest possible example:  two qubits.  Their Hilbert space is $\complex^4$, and the permutation group $S_2 = \{\Id, \pi_{(12)}\}$ has two elements.  Since $\Id$ acts trivially on all states, the irreps of $S_2$ are the eigenspaces of $\pi_{(12)}$.  Its eigenvalues are $\{+1,-1\}$, and its action on $\complex^4$ defines two invariant subspaces:  a 1-dimensional \emph{antisymmetric} subspace (the ``singlet''),
\begin{equation}
\mathcal{H}_{\mathrm{antisymmetric}} = \mathrm{Span}\left(\frac{\ket{01}-\ket{10}}{\sqrt2}\right)
\end{equation}
and a 3-dimensional \emph{symmetric} subspace (the ``triplet'')
\begin{equation}
\mathcal{H}_{\mathrm{symmetric}} = \mathrm{Span}\left(\left\{\ket{00},\frac{\ket{01}+\ket{10}}{\sqrt2},\ket{11}\right\}\right).
\end{equation}
The singlet is an irreducible representation space.  The triplet, however, is not irreducible -- in fact, \emph{any} proper subspace of the triplet is itself invariant, since both elements of $S_2$ act trivially on it.  If we try to reduce the triplet to a direct sum of irreps, we face an embarrassment of choices -- there is no preferred decomposition into 1-dimensional subspaces.

Just for contrast, consider the classical case of two bits.  There are three type classes:  $\{00\}$, $\{01,10\}$ and $\{11\}$.  Each is invariant under permutations, and ``irreducible'' (meaning that it cannot be further subdivided).  Both $00$ and $11$ are symmetric strings, but they are distinguished from one another by their Hamming weight, and by the existence (in classical theory) of a preferred set of symbols, $\{0,1\}$.  If we chose $\{\ket{0},\ket{1}\}$ as a preferred basis for qubits, we could use it to divide the triplet into irreducible subspaces spanned by $\left\{\ket{00},\frac{\ket{01}+\ket{10}}{\sqrt2},\ket{11}\right\}$.  However, the breaking of unitary symmetry is arbitrary and unsatisfying.

That very unitary symmetry suggests a much more elegant solution.  The triplet and singlet are each invariant, not only under permutations, but also under \emph{collective} unitary rotations.  That is, we apply the same $U\in SU(2)$ to each qubit.  Collective rotations of the form $U\otimes U$ (or $U^{\otimes N}$ in general) are a representation of the group $SU(2)$, and the singlet and triplet (being invariant under these rotations) are representation spaces.  Furthermore, they are both \emph{irreducible} representation spaces, for they have no proper rotation-invariant subspaces.

This is the simplest example of \emph{Schur-Weyl duality}.  Schur-Weyl duality is the statement that, given a Hilbert space $\mathcal{H}_d^{\otimes N}$:
\begin{enumerate}
\item The action of the symmetric group $S_N$ commutes with the action of the collective rotation group $SU(d)$, and
\item $\mathcal{H}_d^{\otimes N}$ decomposes into a direct sum of subspaces $\mathcal{H}_\lambda$, each of which is the direct product of an irrep of $SU(d)$ with an irrep of $S_N$:
\begin{equation}\label{eqSchurWeylDecomposition}
\mathcal{H}_d^{\otimes N} = \bigoplus_\lambda{ \mathcal{U}_\lambda\otimes\mathcal{P}_\lambda }
\end{equation}
\end{enumerate}
For two qubits, there are two terms in the decomposition, which we'll denote $\lambda=0,1$, so:
\begin{equation}
\mathcal{H}_2^{\otimes 2} = \mathcal{U}_0\otimes\mathcal{P}_0 \oplus \mathcal{U}_1\otimes\mathcal{P}_1
\end{equation}
Both representations of $S_2$ are trivial, so $\mathcal{P}_0$ and $\mathcal{P}_1$ are both 1-dimensional.  The triplet ($\mathcal{U}_1$) is a 3-dimensional irrep of $SU(2)$, while the singlet ($\mathcal{U}_0$) is 1-dimensional.  We need to add a third qubit to obtain a nontrivial symmetric group representation:  the action of $S_3$ on three qubits has two irreps, one of which is 2-dimensional.

In this decomposition of $N$-qubit strings, the $\mathcal{P}_\lambda$ spaces correspond to type classes, while the irrep label $\lambda$ and the $\mathcal{U}_\lambda$ spaces \emph{together} correspond to the classical type.  This is a little confusing at first; why do we need \emph{two} variables to describe the ``type'' of a quantum string?  It makes more sense if we look at classical types in a slightly different way.  First, we note that whereas the reversible transformations on a single qudit are unitaries in $SU(d)$, the corresponding transformations on a classical $d$-ary system are elements of $S_d$ -- i.e., permutations of the $d$ symbols.  So, we can divide a classical type $T = \{n_1\ldots n_d\}$ into two parts:  (1) a \emph{sorted} list of frequencies $\tilde T=\{n_1\geq n_2 \geq \ldots n_d\}$, and (2) a permutation in $S_d$ identifying which of the $d$ symbols appears 1st, 2nd, etc. in the sorted list.  This view of classical types turns out to be \emph{exactly} analogous to the Schur-Weyl decomposition.  The irrep labels $\lambda$ correspond precisely to nonincreasing partitions $\{n_1\geq n_2\geq\ldots n_d\}$ (where $\sum_k{n_k} = N$).  These ``frequencies'' relate to the eigenvalues of $\rho$ in exactly the same way that the type of an $N$-bit string of i.i.d.~symbols relates to the source probabilities -- if $\rho$ has eigenvalues $\{p_k\}$, then as $N$ gets large, measuring the irrep label of $\rho^{\otimes N}$ gives $\{n_k\} \approx \{Np_k\}$ with high probability.  The $\mathcal{U}_\lambda$ spaces carry information about the diagonal basis of $\rho$.

\subsection{Applying quantum types to concentration}

If Alice and Bob share $N$ partially-entangled qubit pairs in state $\ket{\psi}$, they describe their respective systems by $\rho_A^{\otimes N}$ and $\rho_B^{\otimes N}$, where $\rho_A$ and $\rho_B$ are partial traces of $\proj{\psi}$.  Because $\rho^{\otimes N}$ is permutation-invariant, it can be decomposed according to Eq. \ref{eqSchurWeylDecomposition} as
\begin{equation}
\rho^{\otimes N} = \sum_\lambda{ p_\lambda \rho_\lambda \otimes \frac{\Id}{\mathrm{dim}(\mathcal{P}_\lambda)}},
\end{equation}
a state that is maximally mixed over each type class $\mathcal{P}_\lambda$.  This follows from Schur's Lemma; if a matrix $\rho$ is invariant under $\rho\to\pi\rho\pi^\dagger$ for all $\pi$ in a representation $G$, then $\rho$ is a direct sum of scalar matrices on the irreps of $G$.  The conditional states $p_\lambda\rho_\lambda$ on the various $SU(2)$ irreps are determined by $\rho$, and aren't especially relevant to this discussion.

This is the quantum counterpart of the classical observation that i.i.d.~distributions of strings are uniformly distributed within type classes.  
%If $\rho$ represents quantum data to be compressed, then the $\mathcal{P}_\lambda$ subspaces represent incompressible information.  For a known source $\rho$, only a relatively small number of these subspaces have substantial probability.  Collected together, these form the \emph{typical} subspace.  It is generally much smaller than $\mathcal{H}_d^{\otimes N}$, so compression proceeds by encoding the typical subspace into $\sim NH(\rho)$ qubits.  If we do not know the source, we can achieve the same rate (up to leading order) by making a \emph{weak} measurement of $\lambda$.  This tells us roughly what the typical subspace is, without substantially damaging the quantum state.
For the purposes of entanglement concentration, the states on Alice's $\mathcal{P}_\lambda$ subspaces are not just uniformly random.  They are maximally entangled with their counterparts on Bob's side.  So if Alice and Bob each measure $\lambda$, they get identical results $\hat\lambda$, and are left with a state
\begin{equation}
\rho_A =\rho_B = \rho_{\hat\lambda}\otimes \frac{\Id}{\mathrm{dim(\mathcal{P}_{\hat\lambda})}}.
\end{equation}
Now, recall that they started with a pure state $\ket{\psi}^{\otimes N}$, and performed a projective measurement.  This means that their post-measurement state is pure -- and thus their maximally mixed reduced states correspond to a maximally entangled pure state
\begin{equation}
\hspace{0.4in}\ket{\psi} = \ket{u_{\hat\lambda}}
\otimes \left(\frac{1}{\sqrt{\mathrm{dim}(\mathcal{P}_{\hat\lambda})}}\sum_{j=1}^{\mathrm{dim}(\mathcal{P}_{\hat\lambda})}{\ket{j_Aj_B}}\right).
\end{equation}
If they want perfect EPR pairs, they may as well discard the $\mathcal{U}_{\hat\lambda}$ subsystem, which contains $O(\log N)$ bits of non-maximal entanglement.  They are left with a maximally entangled state over the entire $\mathcal{P}_{\hat\lambda}$ subspace.  This can be converted into EPR pairs by partitioning $\mathcal{P}_{\hat\lambda}$ into subspaces of dimension $2^L$ and measuring $L$, exactly as explained in the proof of Theorem \ref{thm:EliasConcentration}.

\subsection{The streaming Schur transform}

Matsumoto and Hayashi showed how to use the Schur-Weyl decomposition (Eq. \ref{eqSchurWeylDecomposition}) and its properties to achieve optimal universal compression \cite{HayashiPRA02} and entanglement concentration \cite{MatsumotoPRA07}.  These are non-constructive information-theoretic results, like Shannon's random-coding proof of channel capacity, rather than practical implementations.  However, Bacon et. al. recently demonstrated a quantum algorithm to perform the \emph{quantum Schur transform}, which points the way to implementing these protocols efficiently on a quantum computer \cite{BaconPRL06}.  Our goal in this section is to use the Bacon et. al. algorithm as a building block for a \emph{streaming} concentration/compression protocol.

The Schur transform transforms an $N$-qubit Hilbert space $\mathcal{H}_2^{\otimes N}$ into the direct-sum Hilbert space given in Eq. \ref{eqSchurWeylDecomposition},
\begin{equation*}
\mathcal{H}_d^{\otimes N} = \bigoplus_\lambda{ \mathcal{U}_\lambda\otimes\mathcal{P}_\lambda }.
\end{equation*}
This is just a change of basis -- but, then, every unitary transformation is ``just'' a change of basis.  The Schur transform takes as input a single $N$-qubit register, and outputs three quantum registers of different sizes.  We'll call these registers $T$, $U$, and $P$, and in the following list we describe each register and give an example of what its state would be for an input string $\rho^{\otimes N}$.
\begin{enumerate}
\item The $T$ register holds the irrep label $\lambda$.  It is spanned by a basis $\{\ket{\lambda}: \lambda=0\ldots\left\lceil\frac{N+1}{2}\right\rceil\}$.  Measuring the $T$ register provides the best possible estimate of $\rho$'s spectrum -- i.e., whether the individual qubits of the input state are consistently aligned along a particular direction in $\mathcal{H}_2$.
\item The $U$ register holds the state of the $SU(2)$ irrep $\mathcal{U}_\lambda$.  The dimension of $\mathcal{U}_\lambda$ depends on $\lambda$, so $U$ has to be big enough to hold the largest $\mathcal{U}_\lambda$, which is $(N+1)$-dimensional.  $U$ is spanned by a basis $\{\ket{m}: m=0\ldots N\}$.  Measuring the $U$ register provides the best possible estimate of the eigenbasis of $\rho$ -- which, for qubits, is equivalent to the direction of its Bloch vector.  Unlike the $T$ register, the $U$ register does not have an unique basis in which we would measure it to extract information.  Measuring the $\{\ket{m}\}$ basis yields the best estimate of the input string's Hamming weight in the $\{\ket{0},\ket{1}\}$ basis, but if we wanted to know its Hamming weight in the $\{\ket{+},\ket{-}\}$ basis, a different measurement would be optimal.
\item The $P$ register holds the state of the $S_N$ irrep.  As with $U$, this register must be large enough that we can embed \emph{any} of the $\mathcal{P}_\lambda$ spaces into it.  In fact, it must be at least $\frac{2^N}{O(N^2)}$-dimensional, because we're mapping $\mathcal{H}_2^{\otimes N}$ into $T\otimes U\otimes P$, yet both $T$ and $U$ are $O(N)$-dimensional.  In the Bacon et. al. implementation, the $P$ register comprises exactly $N$ qubits, denoted $\{p_1,p_2,\ldots p_N\}$.  When we Schur-transform $\rho = \rho^{\otimes N}$, measurements on this register yield random results.
\end{enumerate}
The key ingredient in the Schur transform is the \emph{Clebsch-Gordan transform}.  It takes as its input the $T$ and $U$ registers, along with the $n$th qubit $s_n$, and outputs updated $T$ and $U$ registers along with the $n$th qubit of the $P$ register, $p_n$.

\begin{center}\includegraphics[width=2in]{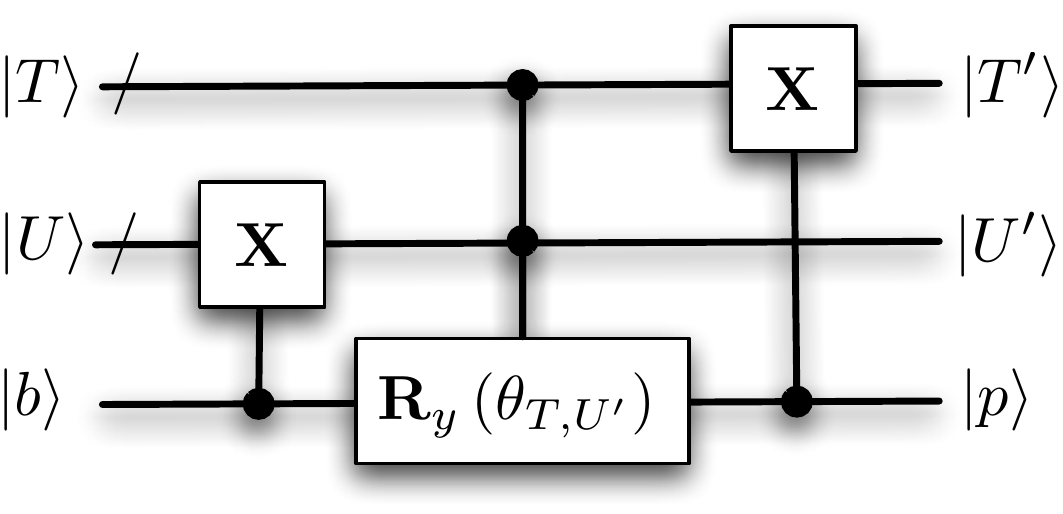}\end{center}

The full Schur transform then consists of initializing the $T$ and $U$ registers, then sequentially applying Clebsch-Gordan transforms to each of the $N$ input qubits:

\begin{center}\includegraphics[width=\HPW]{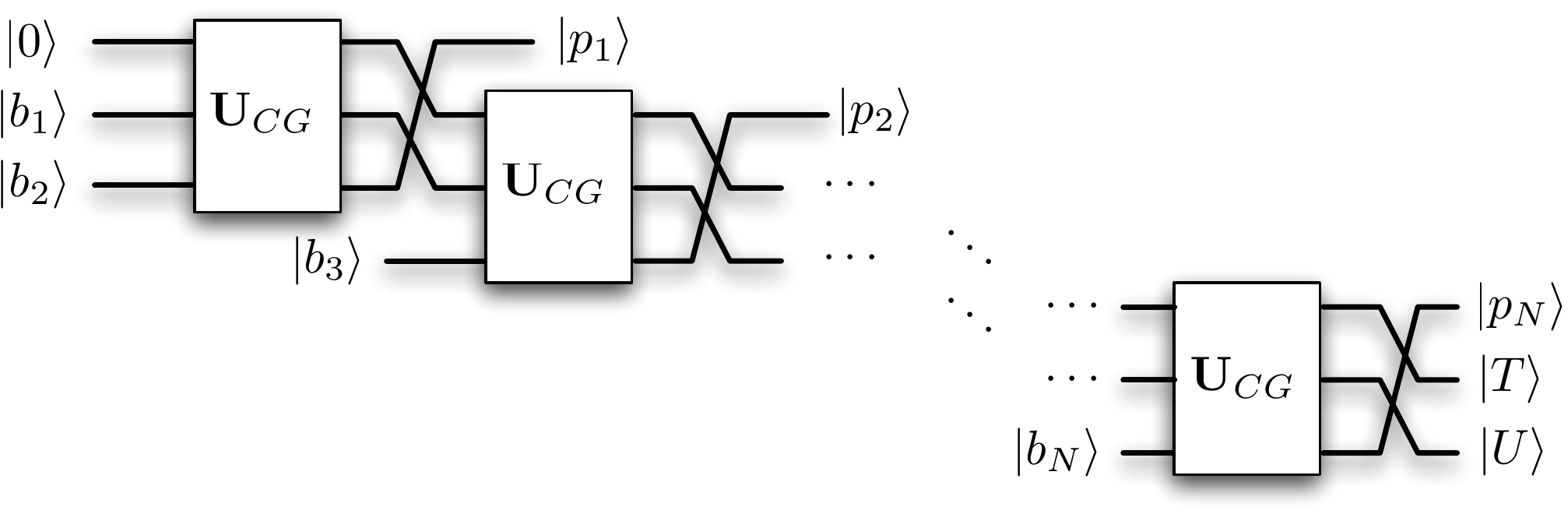}\end{center}

Just a brief glance at the circuit above shows that this implementation of the Schur transform is appropriate for a streaming protocol.  It addresses the input qubits one at a time, and never reuses an earlier qubit.  The only problem is that the $P$ register, holding the $S_N$ irrep, is not in the right form.  Actually, this is a fairly serious problem for \emph{any} application to concentration or compression, because the $P$ register comprises $N$ qubits -- no matter what the input is.  For each input qubit $s_n$, exactly one $p_n$ gets emitted, so the entropy of the input qubits is uniformly distributed across the $N$ $\{p_n\}$ qubits, rather than being compressed.

Compressing the $P$ register requires a peek at the representation theory of $S_N$.  As we mentioned above, the irreps of $S_N$ are labeled by an index $\lambda$, whose values are in 1:1 correspondence with nonincreasing sequences of at most $d$ integers, $\{n_1\geq n_2 \geq \ldots \geq n_d\}$ where $\sum_k{n_k}=N$.  These sequences are usually depicted by \emph{Young diagrams}, arrays of $N$ boxes in at most $d$ rows, with $n_k$ boxes in their $k$th row.  Here is the Young diagram for an irrep of $S_6$:

\begin{center}\includegraphics{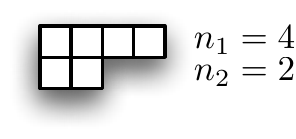}\end{center}

Because the diagram has only 2 rows, it labels irreps of $S_6$ and $SU(2)$ which appear in the decomposition of $\mathcal{H}_2^{\otimes 6}$.  Diagrams with more than 2 rows are not relevant to qubits; they label valid representations of $S_N$, but not of $SU(2)$.  This particular diagram corresponds (roughly) to the class of strings with 4 qubits aligned along a common axis and 2 qubits aligned \emph{against} that axis.

Now, when the Schur transform circuit addresses the $N$th input qubit, $N-1$ qubits have already been transformed.  The state of the $T$ register is therefore a superposition or mixture of states corresponding to Young diagrams with $N-1$ boxes (i.e., $\ket{\lambda=\{n_1,N-1-n_1\}}$.  Adding another qubit corresponds to adding another box to the Young diagram.  We can add it to the first row, or add it to the second row \emph{if} the second row isn't already as long as the first row.  As we read more qubits in, the $T$ register (following this rule) traverses \emph{Young's lattice}:

\begin{center}\includegraphics[height=2.2in]{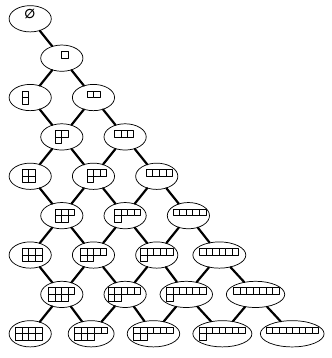}\end{center}

This looks quite a bit like Pascal's triangle, and it plays exactly the same role.  Different types correspond to different locations in the lattice (plus, in the quantum case, an $SU(2)$ register that's not shown), while different strings within a type class correspond to distinct \emph{paths}.  In Young's lattice, each node is labeled by a Young diagram, which labels an irrep of $S_N$ (i.e., a quantum type class).  Each of the paths to a given node corresponds to a distinct state within that class.  Thus, by counting the paths to a node, we obtain the dimension of each representation space:

\begin{center}\includegraphics[height=2.2in]{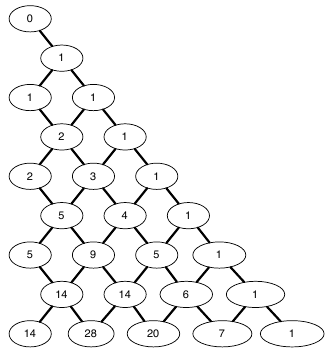}\end{center}

The representation spaces don't have a unique basis, but the path-counting procedure above suggests a convenient basis known as Young's orthogonal basis.  We simply assign to each path $p$ a basis state $\ket{p}$.  Paths to a node in the $N$th row of Young's lattice consist of $N$ steps, and each step is either to the right (meaning we add a box to the first row of the Young diagram) or to the left (meaning we add a box to the second row).  Clearly, any path can be denoted by a sequence of $N$ symbols from the set $\{L=\mathrm{left},R=\mathrm{right}\}$, e.g. $p = RRLLR\ldots$, and thus we can encode all such paths into $N$ bits -- or, since we are dealing with quantum strings, and can traverse Young's lattice in superposition, into $N$ qubits.

This encoding is not efficiently compressed, nor is it appropriate for entanglement concentration.  Since the $P$ register contains complete information about the path taken through Young's lattice, it also contains information about the end-point of the path -- i.e., about the irrep label stored in $T$.  Moreover, for strings in high-weight irreps, most of the steps will be to the right, so most of the $p_k$ bits will be ``R''.  We need to compress the $P$ register in order to extract EPR pairs from it.

Our algorithm is almost perfectly suited to this.  In fact, it can be applied directly with only two changes:
\begin{enumerate}
\item Our algorithm traversed the lattice of Pascal's triangle, whose nodes' sizes are binomial coefficients.  We need to adapt it to traverse Young's lattice, whose nodes have different sizes.  The dimension of an irrep $Y$ of $S_N$ is given by the \emph{hook length formula}:
\begin{enumerate}
\item Draw the Young diagram.
\item To each of the $N$ boxes $x$ in the Young diagram, assign a ``hook length'' $h(x)$, which is the sum of (a) the number of boxes to the right of $x$; (b) the number of boxes directly below $x$; and (c) 1 for $x$ itself.
\item The size of $Y$ is given by
\begin{equation}
\mathrm{Size}(Y) = \frac{N!}{\prod_x{h(x)!~}~}.
\end{equation}
\end{enumerate}
A short calculation for the Young diagram with $N-T$ boxes in the first row and $T$ in the second row gives
\begin{equation}
\mathrm{Size(Y)} = \binom{N}{T} \frac{N-2T+1}{N-T+1}.
\end{equation}
So the size of a quantum type class is very nearly equal to the size of the corresponding classical type class, with a simple rational function giving the discrepancy.  Every instance of ``calculate a binomial coefficient'' in our original algorithm gets replaced by ``calculate the corresponding irrep dimension''.
\item Instead of performing operations conditional on the classical Hamming weight $T$, we condition our operations on the irrep label $T$.  Since all the operations in our algorithm are necessarily coherent anyway, this change brings no significant changes.
\end{enumerate}
This defines what we will call the \emph{quantum streaming Elias transform}.  A single step of the transform can be represented as a unitary operation $U_E$.  

\begin{center}\includegraphics[width=1in]{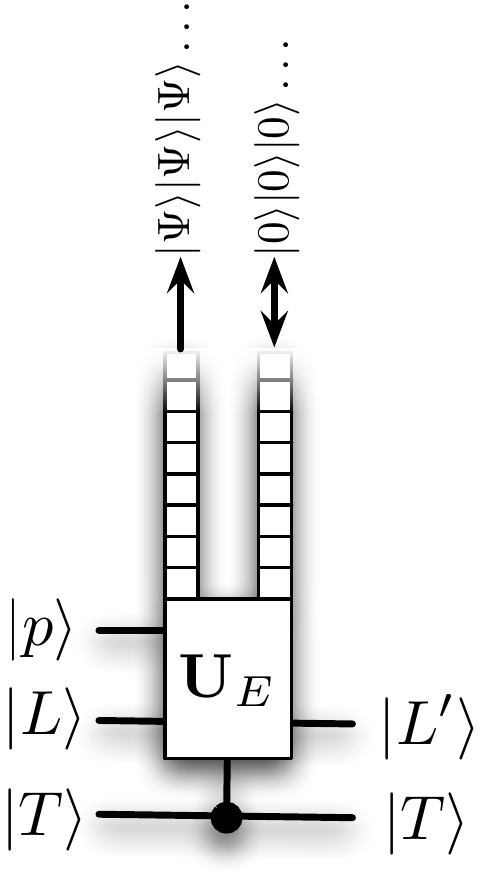}\end{center}

$U_E$ acts on two registers -- an $SU(2)$-invariant qubit $\ket{p}$ and the bin size $\ket{L}$ -- conditional on a third, the irrep label $\ket{T}$.  It also has access to two variable-length tapes.  One is output-only, and holds EPR pair halves.  The other is bidirectional, and holds pure $\ket{0}$ qubits.  The $\ket{p}$ qubit always goes out onto one tape or the other -- but sometimes, $U_E$ also pops one or more qubits off the purity tape, fills them with entanglement from the $\ket{T}$ and $\ket{L}$ registers, and pushes them out the EPR tape.

We can use this protocol, together with the Schur transform, to make a completely universal extraction protocol.

\begin{protocol}
\begin{center}\includegraphics[width=\HPW]{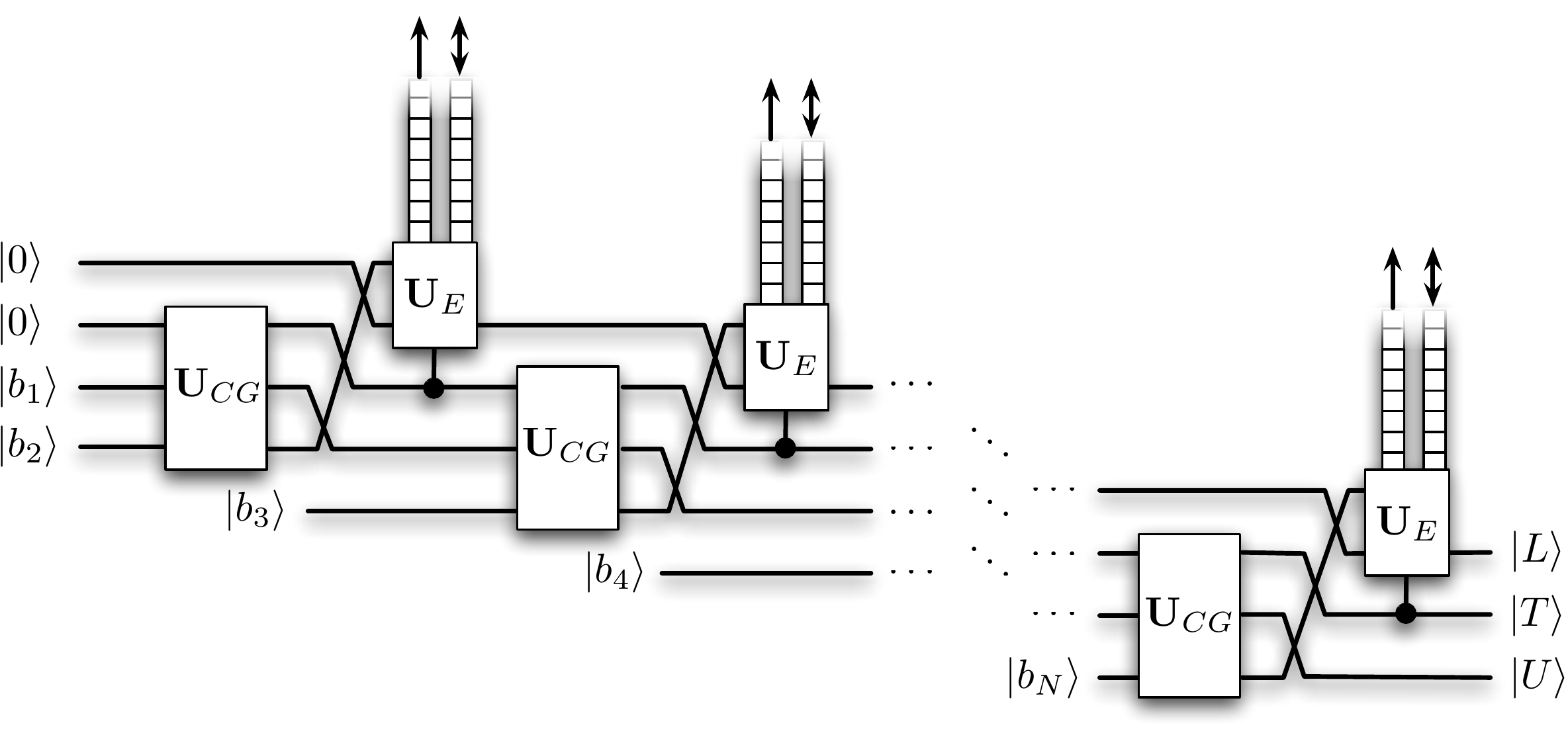}\end{center}
\begin{enumerate}
\item Each new qubit $\ket{b_n}$ is Clebsch-Gordan transformed, yielding updated $\ket{T}$ and $\ket{U}$ registers, and an $SU(2)$-invariant qubit $\ket{p_n}$.
\item $\ket{T}$ and $\ket{p_n}$ are fed into a quantum Elias transform, along with $\ket{L}$.
\item The physical input qubit $\ket{b_n}$, suitably transformed into either $0$ or $1$ EPR pair-half, emerges immediately on one of the two tapes.
\end{enumerate}
\end{protocol}

A few remarks are in order here.  Our algorithm is basically an interleaving of the Schur transform with the quantum Elias transform.  These two components are coupled only by $\ket{T}$; the $\ket{U}$ and $\ket{L}$ registers are only used by the Schur and Elias components (respectively).  We have described the protocol's fully streaming mode, where $N$ can be assumed classical.  On-demand mode requires another quantum register for $\ket{N}$.  The integer registers ($\ket{T},\ket{L},\ket{U}$) must grow with $N$.  If quantum memory is at a premium, pure qubits may be scavenged from the end of the purity tape.  However, the purity tape is \emph{also} used by $U_E$ as a source of fresh qubits, whenever it reads a single $\ket{p}$ bit and outputs more than one EPR pair-half.

\section{Discussion}

This is the first adaptive (streaming and universal) protocol for entanglement concentration.  Because it runs in very little space, it can be implemented using current (or near-future) technology.  This opens the door for experimental implementations of a variety of information theoretic protocols.  We have already used the ideas in this paper to design sequential protocols for \emph{optimal} quantum data compression and state discrimination, which use $O(\log N)$ or even $O(1)$ memory.

Although this protocol can be used for quantum data compression (details will be given elsewhere), good data compression algorithms can fail at entanglement concentration.  There are many ways to encode compressed data which do not meet the (more stringent) structure requirements for concentrated EPR pairs.  Reversible entanglement concentration, on the other hand, seems to necessarily yield data compression \footnote{However, reversibility is critical for compression. For instance, Bennett et. al.'s block concentration protocol \cite{BennettPRA96} fails at compression.  Alice and Bob make precise measurements of their strings' type, which does terrible damage to the input state.}.  Entanglement concentration seems to have stricter requirements than compression.  Given the role of compression in information theory, this suggests that more insights can be gained by applying the stricter requirements of concentration.

Our protocol bolts together two components.  It seems possible to regard \emph{either} component of the algorithm as ``trivial''.  From one perspective, the Schur transform does all the heavy quantum lifting; our algorithm just compresses the $P$ register.  However, consider the classical version of this protocol.  A classical Schur transform does the following:
\begin{enumerate}
\item It counts the number of ``1'' bits in the input to obtain the Hamming weight $T$.
\item It separates the type into two registers: (a) a single-bit ``dictionary'' $U$ that identifies whether $T$ or $N-T$ is bigger (i.e., whether 0 or 1 appears more often); and (b) a ``sorted type'' $\mathrm{max}(T,N-T)$.
\item It strips out the dictionary information ($U$), by replacing the $k$th input bit $s_k$ with a dictionary-invariant bit $p_k = s_k \oplus U$.  This ensures that the $\{p_k\}$ are invariant under any ``collective rotation'' of the entire string.\end{enumerate}
Most of this is computationally trivial.  The most significant step is adding up the Hamming weight of the input.  So the classical equivalent of the Schur transform is basically sequential addition -- which we took for granted in our implementation of the streaming Elias protocol!  Stripping the dictionary register $U$ out of the $\{s_k\}$, which seems optional (and, in fact, rather arbitrary) in the classical variant, is a necessary part of quantum sequential addition; the no-cloning theorem prohibits us from copying information, so in order to calculate and store it in $U$, we must remove all traces of it from the other registers.

The previous paragraph should \emph{not} be taken to imply that the Schur transform itself is in any way trivial.  Rather, we are suggesting that the Schur transform can be seen as the fully quantum analogue of sequential addition.  This isn't actually all that surprising, since the main application of Clebsch-Gordan coefficients is in the addition of angular momentum.  Nonetheless, there is a subtle distinction worth noting:  whereas Clebsch-Gordan coefficients are used to do \emph{classical} calculations about quantum systems, the Schur transform is a fully quantum physical operation.  A similar distinction divides classical simulation of a quantum system from quantum simulation of a quantum system.

Finally, our construction has implications for quantum learning.  Adaptive classical protocols are closely tied to machine learning. Our protocol demonstrates how a quantum computer can ``learn'' a quantum source, and adapt its strategy, without ever making a measurement or collapsing the input state.

\begin{acknowledgments}
The authors are supported by the Government of Canada through Industry Canada and by the Province of Ontario through the Ministry of Research \& Innovation.  D.G. was supported by CIFAR and NSERC.
\end{acknowledgments}

\bibliography{BCGrefs}
\end{document}